\title{Tropical Geometry and Mechanism Design}
\author{Robert Alexander Crowell} 
\address[Robert Alexander Crowell]{Hausdorff Center for Mathematics, Bonn 53115, Germany}
\email{crowellr@uni-bonn.de}
\author{Ngoc Mai Tran}
\address[Ngoc Mai Tran]{Department of Mathematics, University of Texas at Austin, Texas TX 78712, USA and Hausdorff Center for Mathematics, Bonn 53115, Germany}
\email{ntran@math.utexas.edu}
\DeclareMathOperator{\R}{\mathbb{R}}
\DeclareMathOperator{\N}{\mathbb{N}}
\DeclareMathOperator{\covec}{\mathsf{\underline{coVec}}_T} 
\DeclareMathOperator{\Plift}{\mathcal{P}^{\uparrow}}
\theoremstyle{plain}
\newtheorem{theorem}{Theorem}[section]
\newtheorem*{theorem*}{Theorem}
\newtheorem{lemma}[theorem]{Lemma}
\newtheorem{proposition}[theorem]{Proposition}
\newtheorem{corollary}[theorem]{Corollary}
\theoremstyle{definition}
\newtheorem{definition}[theorem]{Definition}
\newtheorem{example}[theorem]{Example}
\newtheorem{remark}[theorem]{Remark}
\newcommand{\1}{\mathbf{1}}
\newcommand{\ti}{t_{-i}}
\newcommand{\cf}{\emph{cf.~}}
\newcommand{\Eig}{\mathsf{\underline{Eig}}}
\newcommand{\TA}{\mathbb{TA}}
\newcommand{\tconv}{\overline{\mathsf{tconv}}}
\newcommand{\minH}{\underline{\mathcal{H}}}
\newcommand{\maxH}{\overline{\mathcal{H}}}
\newcommand{\minL}{\underline{\mathcal{L}}}
\newcommand{\maxL}{\overline{\mathcal{L}}}
\newcommand{\cov}{\underline{\mathsf{coVec}}}
\newcommand{\Ti}{T_{-i}}
\newcommand{\img}{{[g]}}
\newcommand{\cardg}{{\llbracket  g\rrbracket}}
\date{\today}
\begin{document}

\begin{abstract}
We develop a novel framework to construct and analyze finite valued, multidimensional mechanisms using tropical convex geometry. We geometrically  characterize incentive compatibility using cells in the tropical convex hull of the type set. These cells are the sets of incentive compatible payments and form tropical simplices, spanned by generating payments whose number equals the  dimension of the simplex. The analysis of the collection of incentive compatible mechanisms via tropical simplices and their generating payments facilitates the use of geometric techniques. We use this view to derive a new geometric characterization of revenue equivalence but also show how to handle multidimensional mechanisms in the absence of revenue equivalence. 
\vskip .1in

\noindent {\bf Keywords:} Multidimensional Mechanism Design, Tropical Geometry, Tropical Convexity, Tropical Combinatorics, Incentive Compatibility, Cyclical Monotonicity, Revenue Equivalence, Extremal Payments.\\

\noindent
\emph{2010 Mathematics Subject Classification.}  91A80, 14T05. 
\emph{JEL Classification. } D82, C02. 
\end{abstract}
\maketitle

\section{Introduction}

Mechanisms are engineered games, devised to implement outcomes that depend on the private information of individuals in the economy, called their \emph{type}. We consider deterministic mechanisms with $m\in \N$ outcomes and quasi-linear utilities. In this setting types can be represented by a set $T\subset \R^m$. A \emph{mechanism} is specified by pair of an outcome function $g:T \to\{1, \ldots m\}$ and a payment vector $p\in\R^m$, taking a possibly false report $s\in T$ of an agent whose true type is $t^*$ to an outcome and a transfer, yielding utility $\displaystyle t^*_{g(s)}-p_{g(s)}$. An outcome function $g$ is called \emph{incentive compatible} (IC) if there is a payment vector $p$ such that truth-telling is an utility maximizing strategy. 

Multidimensional mechanism design remains a hard topic, mainly because constructive techniques known from single-dimensional settings no longer apply in more general models, e.g. \cite{milgrom2002envelope,mcafee1988multidimensional}. In this paper we develop geometric techniques to handle deterministic multidimensional mechanisms using tropical convex geometry. 

Rochet's classical characterization \cite{rochet1987necessary} states that an outcome function $g$ is IC if and only if it is \emph{cyclically monotone}, meaning that the induced \emph{type graph} on node set $T$ with edge weights $l_{s,t}^g=t_{g(s)}-t_{g(t)}$ contains no negative weight cycles \cite{rockafellar1966characterization, levin1999abstract, gui2004dominant, vohra2011mechanism}. 
This characterization is implicit since it analyzes mechanisms via network flows in digraphs induced by the outcome function. That is, given a type space $T$ and an outcome function $g$, deciding whether $g$ is IC is possible by inspecting the type graph. However, given a type space $T$ and digraph all of whose cycles have non-negative weights, we lack an applicable technique to decide whether this graph arose from some outcome function on $T$, and if so, which outcome functions induced it. The difficulty is that cycle conditions do not encode agent's choices explicitly, and thus do not facilitate defining IC outcome functions from them, even though efforts have been undertaken to  interpret these conditions economically, e.g. \cite{rahman2010detecting}.

Another obstacle is the calculation of IC payments. Even if a given outcome function $g$ is known to be IC, the characterization of its IC payments $\mathcal P(g)$ is complex. It is well-known that $\mathcal P(g)$ is a polytope described by $m(m-1)$ hyperplanes, one for each ordered pair of outcomes. However, using this to gain a structural understanding of a mechanism's payments proves difficult. Efforts to find usable characterizations of IC payments are ongoing, \cite{carbajal2013mechanism, kos2013extremal}. For instance, Kos and Messner \cite{kos2013extremal} gave `extremal payments' which bound all IC payments from above and below, but which are not sufficient to characterize $\mathcal P(g)$.  
Absent a general characterization, the case in which  $\mathcal{P}(g)$ consist of a single payment up to addition of constants, known as \emph{revenue equivalence} (RE), remains an essential ingredient for mechanism design, as it has been from the outset \cite{myerson1981optimal, klemperer1999auction}.
A characterization of revenue equivalence for outcome functions relying on the type graph was given by Heydenreich, M\"uller, Uetz and Vohra \cite{heydenreich2009characterization}. In fact, the extremal payments of \cite{kos2013extremal} can be seen as a refinement of their result. Criteria on type spaces to force RE for any IC outcome function have been obtained by Chung and Olszewski \cite{chung2007non-differentiable}, but are quite involved.  However, RE precludes any re-distributive motives of information rents once the outcome function has been fixed, an assumption that is restrictive in some economic contexts, see \cite{carbajal2013mechanism,kos2013extremal} for a discussion and references.

In this paper we develop a novel geometric view on the study of mechanisms rooted in tropical geometry, thereby offering new perspectives on these problems. Tropical convex geometry was developed to analyze combinatorial aspects of network flow problems geometrically. Our core characterization result, Theorem \ref{thm:main.generic}, shows that there is a one-to-one correspondence between the fundamental economic concepts of IC outcome functions and their payments, and the basic objects of tropical convex geometry: tropical polytopes, covectors and tropical eigenspaces. This simple observation provides a new constructive characterization of all IC outcome functions together with their IC payments which is not based on cyclical monotonicity, but instead on the \emph{tropical polytope} generated by $T$, a polyhedral complex defined from the type space whose cells naturally encode economic incentives. This novel view is foundational for all our results. To discuss them, let us draw the parallels to the problems we identified above. 

Our approach provides the means to use cyclical monotonicity conditions constructively. The \emph{allocation matrix} $L^g$ of the outcome function $g:T\to\{1, \ldots, m\}$ is obtained from the weighted digraph used in Rochet's theorem by quotienting zero weight cycles, \cite{vohra2011mechanism}. Given a matrix $L$ and a type space $T$, we show in Theorem \ref{thm:realizable} how to decide whether there exist an outcome function $g$ such that $L = L^g$, and if so, how to explicitly construct these outcome functions based on the geometry of the type space $T$. 

Another contribution is the complete characterization IC payments. Consider again the polytope $\mathcal P(g)$ of IC payments of $g$. An alternative to its hyperplane characterization would be to present its vertices, from which all other payments could be written as convex combinations. Unfortunately, these do not have a simple characterization and their number can range from $1$ to $\binom{2m-2}{m-1}$, \cite[Prop. 19]{develin2004tropical}. In contrast, in tropical algebra, $\mathcal P(g)$ is always a \emph{tropical simplex}. It can be written as the tropical convex hull of a unique minimal set of generating elements whose number equals the dimension of $\mathcal P(g)$. We term these elements the \emph{tropical generating payments} and geometrically characterize them and their number in Theorems \ref{thm:tropicaleigenspacegenerator} and \ref{thm:dimension}. 
This result demonstrates that tropical algebra is well-suited for working with IC payments, by intertwining algebraic and geometric techniques, providing a perspective not available prior. It allows us to explicitly incorporate the geometry of $T$ into the analysis in ways that algebraic  manipulation of incentive constraints does not.  
As a consequence we obtain Corollary \ref{cor:re}, which provides a novel characterization of RE outcome functions. It refines the result on RE type spaces of \cite{chung2007non-differentiable} and gives an explicit, geometric characterization of RE outcome functions, whose implicit and algebraic counterpart was established in \cite{heydenreich2009characterization}. We elaborate on these connections in Section \ref{sec:re.details}. 

While rooted in the mathematics of tropical geometry, the objects we define naturally represent economic incentives. Here we also argue that tropical geometry provides a complementary picture to abstract cycle conditions for mechanism design. It is well known that network flows can be cast in terms of tropical linear algebra \cite{baccelli1992synchronization, butkovivc2010max}, both are very powerful tools to analyze a given outcome function \cite{vohra2011mechanism}. Indeed, in Section \ref{sec:payments.algebra}, we utilize classical results on tropical eigenspaces to derive new results on generating IC payment. However, tropical geometry has tools from combinatorics, integer programming, and algebraic geometry, which do not have an immediate network flow interpretation. Section \ref{sec:informal} contains an informal discussion with examples to demonstrate the complementary geometric, algebraic and combinatorial views on mechanisms tropical mathematics provides, and how we utilize them to derive our results.  For instance, Theorem \ref{thm:main.generic}, was achieved through the combinatorics of tropical hyperplane arrangements. As a corollary, we show that the number of IC outcome functions on any generic type space with $m$ outcomes and $r$ types is exactly $\binom{r+m-1}{m-1}$. That is, while the exact set of outcome functions depends on the type space, their number does not. Such results would be difficult to obtain from the network flow approach alone.

\subsection*{Organization} Section \ref{sec:informal} begins with an informal example to demonstrate our main points. This section is fully self-contained. In Section \ref{sec:overview} we recall the basic terminology of mechanisms and tropical convex geometry. This section also introduces our running example, Example~\ref{ex:covectors}. In Section \ref{sec:main}, we collect the main results of our geometric theory, and provide an extensive discussion of its economic interpretation. In Section \ref{sec:realizability} we give a new result on the characterization of allocation matrices. For clarity and ease of exposition, technical generalizations to multi-player settings, generic and non-compact type spaces are given in Section \ref{sec:extensions}. All proofs are collected in Section \ref{sec:proofs}. We conclude with a summary in Section~\ref{sec:conclusion}.

\subsection*{Further Literature}While this paper is self-contained, it is not intended as an introduction to tropical geometry. The mathematics we build on was pioneered by \cite{develin2004tropical} and further developed in \cite{fink2013stiefel,joswig2016weighted}, but tropical convex geometry and tropical linear algebra have a much longer history, see \cite{butkovivc2010max,baccelli1992synchronization}.
For a fuller picture, see the monographs \cite{butkovivc2010max, baccelli1992synchronization} on solving tropical linear equations, or \cite{joswig20xxessentials} on tropical combinatorics, and \cite{maclagan2015introduction} on tropical algebraic and convex geometry. We refer to \cite{ziegler2012lectures} for the basic terminology of classical polyhedral geometry. The works of \cite{baldwin2012tropical, baldwin2015understanding, tran2015product-mix} are the first to apply tropical geometry to microeconomic theory. Other applications include \cite{shiozawa2015exotic, joswig2016cayley} to trade theory and   \cite{akian2012tropical} to  mean-payoff games. These works attest to fruitful interactions between tropical geometry and economics. 

\subsection*{Notation} For an integer $n$, define $[n] := \{1, 2, \ldots, n\}$. If $g:T\to [m]$ is an outcome function, then we shall denote its image in $[m]$ by $\img$, and the cardinality of this set by $\cardg$.
We use the underline notation, such as $\underline{\oplus}, \underline{\odot}, \underline{\mathcal{H}},\ldots$ to indicate objects defined with arithmetic done in the min-plus algebra, and the overline notation $\overline{\oplus}, \overline{\odot}, \overline{\mathcal{H}},\ldots$ to indicate the same objects defined with arithmetic in the max-plus tropical algebra. Identify a graph with its incidence matrix.

\section{An Informal Overview and Example}\label{sec:informal} We begin with an informal discussion. This section is fully self-contained and can be skipped if the reader prefers delving into the formal definitions right away. Tropical mathematics provides two complementary perspectives on mechanism design, one via tropical linear algebra, the other via tropical convex geometry. It is the interplay of these views that distinguishes the tropical approach from other techniques. 

Fix a type space $T\subset \R^m$. If $g:T\to [m]$ is an outcome function we denote by $\mathcal P(g)\subset \R^m$ the set of IC prices, i.e. vectors $p\in \R^m$ that satisfy $t_{g(t)} - p_{g(t)} \geq t_{g(s)}-p_{g(s)}$ for all $t,s \in T$. Conversely, given a price vector $p\in \R^m$ we denote by $\mathcal G({p})$ the set of outcome functions $g:T\to [m]$ for which $p$ is an IC price. 
First, we wish to understand the maps $p\mapsto \mathcal G(p)$ and $g\mapsto \mathcal P(g)$ using tropical hyperplane arrangements. To simplify the discussion, suppose that the outcome function $g: T \to [m]$ is onto and that $T$ is compact. Let $V_t: \R^m \to \R$ be the function that maps a payment vector $p \in \R^m$ to the negative of the equilibrium  utility\footnote{We chose the negative of the utility to be consistent with the sign convention in the tropical and cyclical monotonicity literature.}, that is,
$$ V_t(p) = - \max_{s\in T}(t_{g(s)}-p_{g(s)})=-\max_{i \in [m]}(t_i-p_i) = \min_{i\in[m]}(p_i - t_i).$$ 
This is a piecewise linear concave function on $\R^m$, obtained as the point-wise minimum of $m$ classical hyperplanes. In tropical min-plus algebra $(\R,\underline \oplus,\odot)$ where tropical addition of numbers is their  minimum, $a\underline \oplus b= \min\{a, b\}$, and tropical multiplication is the usual addition, $a\odot b=a+b$ for $a, b\in \R$, the function $V_t$ becomes formally linear
\begin{equation}\label{eqn:pz.min}
V_t(p) = (-t)^\top \underline{\odot} \, p =\underline \bigoplus_{i=1}^m (-t_i)\odot p_i.
\end{equation}
The set of points where $V_t$ is not smooth is called the min-plus tropical hyperplane with apex~$t$, denoted $\minH(-t)$. In economics terms these are prices $p\in \R^n$ at which the agent of type $t$ would be indifferent between two or more outcomes. Conversely, at any price $p$ not in the tropical hyperplane an agent of type $t$ has a unique preferred outcome. The union of all min-plus hyperplanes with apices in $T$ is called the min-plus arrangement of $T$, and is denoted $\minH(-T):=\bigcup_{t\in T}\minH(-t)$. This set partitions $\R^m$ into polyhedral regions called cells.
All prices $p$ in the relative interior of a given cell $\sigma$ induce the same set of preferred outcomes for all types $t \in T$. Therefore, all prices $p\in \sigma$ must have the same set of IC outcome functions $\mathcal{G}(p)$. Conversely, each IC outcome function $g$ has a payment $p \in \R^m$, which belongs to some cell. Thus, one can infer the maps $p \mapsto \mathcal{G}(p)$ and $g \mapsto \mathcal{P}(g)$ from $\minH(-T)$. This is the main content of Proposition \ref{prop:covector} and the key bridge between tropical geometry and incentive compatibility. 

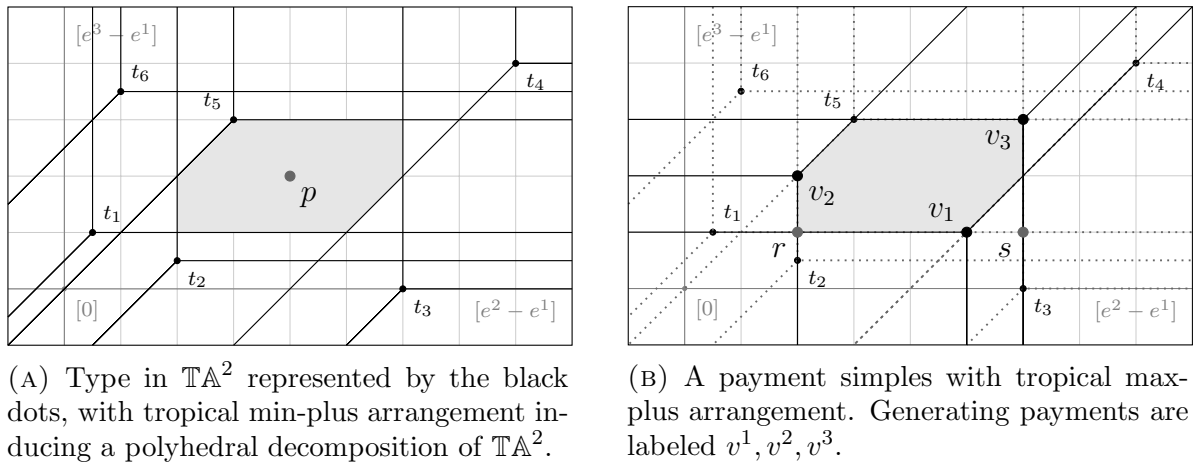
\begin{figure}
    \centering
\begin{subfigure}[t]{0.45\textwidth}
\centering
\begin{tikzpicture}[scale=0.75]
	\draw[help lines, step=1, black!20] (0,0) grid (10,6);
	
	\filldraw [black!10](3,2) -- (6,2) -- (7,3) -- (7,4) -- (4,4) -- (3,3);

\draw[black!50] (1,0) -- (1,6);
		\node[right, black!50] at (1,5.5) {\tiny $[e^3-e^1]$};
		\draw[black!50] (0,1) -- (10,1);
		\node[below, black!50] at (9,1) {\tiny $[e^2-e^1]$};
		\filldraw[black!50] (1,1) circle (1pt);
		\node[below right, black!50] at (1,1) {\tiny $[0]$};

	\filldraw (4,4) circle (1.5pt);
	\node[above left] at (4,4) {\tiny $t_5$};
	\draw [line width=0.15mm](4,6) -- (4,4) -- (0,0) -- (4,4) -- (10,4);

	\filldraw (2,4.5) circle (1.5pt);
	\node[above right] at (2,4.5) {\tiny $t_6$};
	\draw [line width=0.15mm](2,6) -- (2,4.5) -- (0,2.5) -- (2,4.5) -- (10,4.5);
	
	\filldraw (3,1.5) circle (1.5pt);
	\node[below right] at (3,1.5) {\tiny $t_2$};
	\draw [line width=0.15mm](3,6) -- (3,1.5) -- (1.5,0) -- (3,1.5) -- (10,1.5);
	
	\filldraw (1.5,2) circle (1.5pt);
	\node[above right] at (1.5,2) {\tiny $t_1$};
	\draw [line width=0.15mm](1.5,6) -- (1.5,2) -- (0,0.5) -- (1.5,2) -- (10,2);
	
	\filldraw (7,1) circle (1.5pt);
	\node[below right] at (7,1) {\tiny $t_3$};
	\draw [line width=0.15mm](7,6) -- (7,1) -- (6,0) -- (7,1) -- (10,1);
	
	\filldraw (9,5) circle (1.5pt);
	\node[below right] at (9,5) {\tiny $t_4$};
	\draw [line width=0.15mm](9,6) -- (9,5) --(10,5) --(9,5) (4,0) -- (9,5);
	
	\filldraw[black!60] (5,3) circle (2.5pt);
	\node[below right] at (5,3) {$p$};

	\draw[line width=0.1mm] (0,0) rectangle (10,6);
\end{tikzpicture}
\caption{Type in $\TA^2$ represented by the black dots, with tropical min-plus arrangement inducing a polyhedral decomposition of $\TA^2$. }
\end{subfigure}
    \hspace{0.5cm}
\begin{subfigure}[t]{0.45\textwidth}
\centering
\begin{tikzpicture}[scale=0.75]
	\draw[help lines, step=1, black!20] (0,0) grid (10,6);

	\draw[black!50] (1,0) -- (1,6);
		\node[right, black!50] at (1,5.5) {\tiny $[e^3-e^1]$};
		\draw[black!50] (0,1) -- (10,1);
		\node[below, black!50] at (9,1) {\tiny $[e^2-e^1]$};
		\filldraw[black!50] (1,1) circle (1pt);
		\node[below right, black!50] at (1,1) {\tiny $[0]$};

\filldraw [black!10](3,2) -- (6,2) -- (7,3) -- (7,4) -- (4,4) -- (3,3); 

	\filldraw (4,4) circle (1.5pt);
	\node[above left] at (4,4) {\tiny $t_5$};
	\draw [line width=0.25mm, dotted, black!60](4,6) -- (4,4)--(10,4);
	\draw [line width=0.25mm, dotted, black!60](4,4) -- (0,0);

	\filldraw (2,4.5) circle (1.5pt);
	\node[above right] at (2,4.5) {\tiny $t_6$};
	\draw [line width=0.25mm, dotted, black!60](2,6) -- (2,4.5) -- (10,4.5);
	\draw [line width=0.25mm, dotted, black!60](2,4.5) -- (0,2.5);
	
	\filldraw (3,1.5) circle (1.5pt);
	\node[below right] at (3,1.5) {\tiny $t_2$};
	\draw [line width=0.25mm, dotted, black!60](3,6) -- (3,1.5) -- (10,1.5);
	\draw [line width=0.25mm, dotted, black!60](3,1.5) -- (1.5,0);
	
	\filldraw (1.5,2) circle (1.5pt);
	\node[above right] at (1.5,2) {\tiny $t_1$};
	\draw [line width=0.25mm, dotted, black!60](1.5,6)  -- (1.5,2) -- (10,2);
	\draw [line width=0.25mm, dotted, black!60](1.5,2) -- (0,0.5);
	
	\filldraw (7,1) circle (1.5pt);
	\node[below right] at (7,1) {\tiny $t_3$};
	\draw [line width=0.25mm, dotted, black!60](7,6) -- (7,1) -- (10,1);
	\draw [line width=0.25mm, dotted, black!60](7,1) -- (6,0);
	
	\filldraw (9,5) circle (1.5pt);
	\node[below right] at (9,5) {\tiny $t_4$};
	\draw [line width=0.25mm, dotted, black!60](9,6) -- (9,5) -- (4,0) -- (9,5);
	\draw [line width=0.25mm, dotted, black!60](9,5) -- (10,5) ;

	\filldraw [black] (7,4) circle (2.5pt);
	\draw [line width=0.15mm](0,4) -- (7,4) -- (7,0) -- (7,4) -- (9,6);
	\node[below left] at (7,4) {$v_3$};
	\filldraw [black] (3,3) circle (2.5pt);
	\draw [line width=0.15mm](0,3) -- (3,3) -- (3,0) -- (3,3) -- (6,6);
	\node[below right] at (3,3) {$v_2$};
	\filldraw [black] (6,2) circle (2.5pt);
	\draw [line width=0.15mm](0,2) -- (6,2) -- (6,0) -- (6,2) -- (10,6);
	\node[above left] at (6,2) {$v_1$};

		\node[below left] at (3,2) {$r$};
	\filldraw [black!60] (3,2) circle (2.5pt);
			\node[below left] at (7,2) {$s$};
	\filldraw [black!60] (7,2) circle (2.5pt);

	\draw[line width=0.1mm] (0,0) rectangle (10,6);
\end{tikzpicture}
        \caption{A payment simples with tropical max-plus arrangement. Generating payments are labeled $v^1, v^2, v^3$.}
    \end{subfigure}
\caption{The type space $T$ consists of the points $\{t^1, \ldots, t^6 \}$ represented in $\TA^2$}\label{fig:motivation}
\end{figure}

\begin{example}[Cells and outcome functions]\label{ex:motionation.1} Consider $m=3$ outcomes. Notice that tropical hyperplanes and thus the cells of the tropical min-plus arrangement are closed under addition of constants.  This leads to the definition of \emph{tropical affine space} $\TA^2$, whose elements are classes consisting of vectors in $\R^3$ which are tropical scalar multiples of each other, i.e. differ by addition of constants. In economics terms, types depicted in $\TA^2$ are \emph{incentive types} rather than payoff types. Incentive types distill the economics of incentive compatibility. Explicitly a point $(x_1, x_2, x_3)\in \R^3$ projected to the coset $[(x_1, x_2, x_3)]\in \TA^2$ can be depicted as $(x_2-x_1, x_3-x_1)$ in $\R^2$ by normalizing the first coordinate to zero. Thus we may depict figures in two instead of three dimensions.
Fix a type set $T=\{t^1, \ldots, t^6 \}\subset \R^3$ represented by the small black dots in Figure \ref{fig:motivation} (\textsc{a}). For instance, the payoff type $t^1=(1, 1.5, 2)\in \R^3$ is represented in the figure as the class $[(1, 1.5, 2)]=[(0, 0.5, 1)]$ in $\R^2$. At each point we also depicted a min-plus tropical hyperplane using fine black lines. Each of the full-dimensional regions in the figure is a cell. By the rational above the full-dimensional cells in this example are in bijection with the IC outcome functions on $T$. Moreover, each of these cells equals the IC payment set of the associated outcome function. It turns out that the full-dimensional, bounded cells correspond to onto mechanisms, while the outcome functions corresponding to full-dimensional, unbounded cells are not onto. Now consider the gray shaded cell and the payment vector $p=[(p_1, p_2, p_3)]=[(0,4,2)]$ defined up to an additive constant. The set of its IC outcome functions $\mathcal G (p)$ contains a unique IC $g:T\to [3]$ which is given by $$g(t^1) = g(t^2)=1,\quad g(t^3)=g(t^4)=2, \quad g(t^5)=g(t^6)=3.$$ The construction of outcome functions from cells follows from combinatorial data associated to the arrangement known as \emph{covectors}, defined in Section \ref{sec:connections}, but can also be carried out geometrically as in Example~\ref{ex:construction1}, below.
\end{example}

The cells of the min-plus arrangement can be considered from a more abstract and algebraic viewpoint. Each bounded cell of the arrangement is a \emph{tropical simplex} which enjoys algebraic structure. Tropical simplices admit a unique minimal set of generators whose min-plus linear span exactly equals the simplex. While each tropical simplex when viewed as an ordinary convex polytope is also spanned as the classical convex hull of its vertices, these vertices are difficult to characterize, and their number can range anywhere between $1$ and $\binom{2m-2}{m-1}$, \cite[Prop. 19]{develin2004tropical}. In contrast, in tropical algebra the generators are easily geometrically characterized and their number always equals the dimension of the tropical simplex in $\R^m$, \cf Theorem \ref{thm:dimension}. These complementary views between tropical convex geometry and tropical linear algebra are not available for the orthodox techniques applied in the study of mechanisms. In particular the equality between the number of algebraic generators and the geometric dimension is a unique feature of tropical arithmetic. 

The significance of these observations for economics become evident upon noting that all IC payment sets are tropical simplices upon choosing definitions carefully, namely restricting to the \emph{tropical polytope of $T$}, i.e. the bounded cells of $\minH(-T)$. This allows us to give a unique minimal set of \emph{generating payments} for any outcome function sufficient to characterize all its IC payments, \cf Theorem \ref{thm:tropicaleigenspacegenerator}. The task of characterizing IC payments was picked up by \cite{kos2013extremal, carbajal2013mechanism}. Kos and Messner \cite{kos2013extremal} defined extremal payments which bound any IC payment from above and below, but generally prove to be not a sufficient characterization.

\begin{example}[Generating payments]
	Consider again the gray cell in Figure \ref{fig:motivation} which is precisely $\mathcal P(g)$ for $g$ from Example \ref{ex:motionation.1}. Noting that this cell has dimension two in $\TA^2$ and thus dimension three in $\R^3$, it has three generating payments which can be represented by the gray  points labeled $v_1=(0,5,1), v_2=(0,2,2), v_3=(0,6,4)$ in Panel (\textsc{b}). These are sufficient to fully characterize all IC payments for $g$  as follows. Algebraically, for any $q  \in \mathcal P(g)$ there exist constants $c_1, c_2, c_3\in \R$ such that $q=c_1\odot v_1 \underline \oplus c_2\odot v_2 \underline \oplus c_3\odot v_3$, moreover any such linear combination is an IC payment for $g$. For instance, we may write $$p=(0,4,2)=(1\odot v_1 )\underline \oplus (2\odot v_2) \underline \oplus v_3=(1, 6, 2) \underline \oplus (2, 4, 4) \underline \oplus (0, 6, 4).$$ Geometrically, the gray cell equals the union of bounded cells of the \emph{max-plus} tropical hyperplane arrangement on $v_1, v_2, v_3$, depicted with black lines in Panel (\textsc{b}), \cf Remark \ref{rem:duality}.  This example also shows that generally two payments are not sufficient to characterize all IC payments. Indeed, in Example \ref{ex:information.rents.2} we give a geometric construction of the extremal payments of Kos and Messner, which can be represented by the points labeled $v_3$ and $r$ in Figure~\ref{fig:motivation}~(\textsc b). Upon normalizing appropriately, the payment vector labeled $s$ is bounded point-wise above and below by $v_3$ and $r$, respectively, but is not IC for $g$. Moreover, $\mathcal P(g)$ equals the ordinary convex hull of 6 points, which matches the upper bound $\binom{4}{2}$, however, the reader will check that the IC payment sets of other onto outcome functions on $T$, i.e. full-dimensional bounded cells may have fewer vertices. This is not the case for tropical arithmetic, where all full-dimensional bounded cells have exactly 3 tropical generators. \end{example}
 
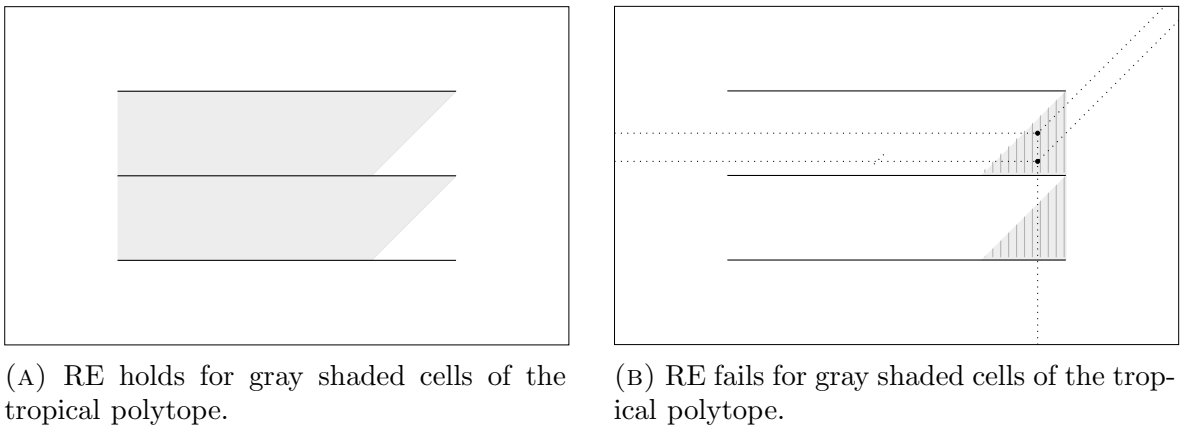
\begin{figure}
    \centering
    \begin{subfigure}[t]{0.45\textwidth}
        \centering
        \begin{tikzpicture}[scale=0.75]
		\filldraw [black!10, fill opacity=0.7] (2,1.5) -- (6.5,1.5) -- (8,3) -- (2, 3);	
		\filldraw [black!10, fill opacity=0.7] (2,3) -- (6.5,3) -- (8,4.5) -- (2, 4.5);	
		
		\draw[black] (2,1.5) -- (8,1.5);
		\draw[black] (2,3) -- (8,3);
		\draw[black] (2,4.5) -- (8,4.5);
		
		\draw[line width=0.1mm] (0,0) rectangle (10,6);
		
		\end{tikzpicture}
        \caption{RE holds for gray shaded cells of the tropical polytope.}
    \end{subfigure}%
    \hspace{0.5cm} 
    \begin{subfigure}[t]{0.45\textwidth}
        \centering
        \begin{tikzpicture}[scale=0.75]
		
		\filldraw [black!10, fill opacity=0.7] (6.5,1.5) -- (8,1.5) -- (8,3);
		\filldraw [black!10, fill opacity=0.7] (6.5,3) -- (8,3) -- (8,4.5);	
		\draw[black] (2,1.5) -- (8,1.5);
		\draw[black] (2,3) -- (8,3);
		\draw[black] (2,4.5) -- (8,4.5);
		
		\fill[pattern=vertical lines, pattern color=black!30] (6.5,3.05)-- (8,3.05)--(8,4.5)--(6.5,3.05);
	
		\fill[pattern=vertical lines, pattern color=black!30] (6.5,1.55)-- (8,1.55)--(8,3)--(6.5,1.5);
		
		\filldraw (7.5,3.75) circle (1pt);
		\draw [thin, dotted](7.5,0) -- (7.5,3.75) -- (9.75,6);
		\draw [thin, dotted](0,3.75) -- (7.5,3.75); 

		\filldraw (7.5,3.25) circle (1pt);
		\draw [thin, dotted](0,3.25) -- (7.5,3.25) -- (10,5.75);
		\draw [thin, dotted](4.6,3.2) -- (4.8,3.4) ;

		\draw[line width=0.1mm] (0,0) rectangle (10,6);
		
\end{tikzpicture}
        \caption{RE fails for gray shaded cells of the tropical polytope.}
    \end{subfigure}
\caption{The type space consists of three parallel lines. On this type space RE and non-RE outcome functions coexist.}\label{fig:re}
\end{figure}

A very special case considered in the literature is that of \emph{revenue equivalence} (RE). This occurs exactly when there is only one generating payment. The link between the dimension of cells and their number of generators equips us with a simple but powerful strategy to give a geometric characterization of RE by analyzing the dimension of cells in the min-plus arrangement on $T$. Our characterization of RE is presented in Corollary \ref{cor:re}.

\begin{example}[RE via dimension of cells]
To demonstrate our geometric view of RE consider Figure \ref{fig:re}. There the type space $T$  consists of three parallel lines. Panel (\textsc{a}) depicts the cells whose associated mechanisms are RE, panel (\textsc{b}) depicts the cells whose associated mechanisms are not RE. The union of the gray regions coincides with the tropical polytope generated by $T$, which can be identified with the IC payments of any IC outcome function on $T$. Using approximations we see that payment simplices highlighted in panel (\textsc{a}) converge to points, while those highlighted in panel (\textsc{b}) converge to vertical lines, as schematically indicated. For instance, the payments represented by the two black dots in panel (\textsc{b}) can be used as IC payments of the same outcome function, yet differ not just by a constant. 
\end{example}
Using our view of RE we give geometric meaning to a characterization of RE of \cite{heydenreich2009characterization}. There it is shown that RE of an outcome function is equivalent to the anti-symmetry of the graph-distance in its type graph. This connection is formalized in Corollary \ref{cor:re.function} and serves once more to manifest the tight connection between tropical algebra and tropical geometry. Importantly however, our result is formulated in terms of a geometric property of cells of the tropical polytope which are \emph{a priory} not related to outcome functions, while the characterization in \cite{heydenreich2009characterization} is formulated as a property of type graphs induces by IC outcome functions. It is therefore not clear how assumptions on the geometry of type spaces affect the RE property. A recent result emphasizing a geometric view is due to \cite[Theorem 4]{chung2007non-differentiable}, which states that a type space $T$ is RE if and only if there do not exist disjoint subsets $B_1, B_2$, a function $r: B_1 \cup B_2\to \R$, and an $\epsilon >0$, such that $T$ equals the union of two non-empty sets $V_+(B_1,\epsilon,r)$ and $V_-(B_2,\epsilon,r)$, whose definitions depend on the parameters given. Yet their result pertains to type spaces as a whole and provides only an incomplete understanding of type spaces on which RE and non-RE outcome functions coexist. Our result can be applied using a straight-forward approximation technique and checking geometrically that certain cells of the tropical polytope converge to points, providing a simple algorithm that is applicable to any type space. It applies to subsets of all IC outcome functions and by extension to type spaces globally providing us with a clear geometric understanding of the RE property. 

To conclude the discussion let us briefly return to the problem of deciding whether a given type graph to which Rochet's theorem is to be applied, arose on a fixed  type space $T$ and if so, which outcome functions induced it. Let us call such graphs \emph{realizable} on $T$. To answer the realizability question we rely on the dual representation of tropical simplices by hyperplanes. Indeed, each simplex possesses a dual representation via tropical hyperplanes. Somewhat more surprisingly the same simplex can be represented in min-plus as well as max-plus tropical arithmetic. Our characterization of realizability in Theorem \ref{thm:realizable} crucially relies on these dual geometric  representations of the simplices in $\minH(-T)$ enabling us to explicitly relate to the geometry of the type space in question. 

These examples illustrate that tropical mathematics is uniquely suited to  analyze the cells of $\minH(-T)$ which encode preferences and incentives to gain economic insights that would be difficult to obtain using only established techniques. In particular tropical techniques provide the geometric tools that allow us to visualize mechanisms and easily construct explicit examples.

\section{Mechanism Design and Tropical Geometry: an overview} \label{sec:overview}
\subsection{Mechanisms and incentive compatibility}\label{sec:background}
Consider a game with one agent and $m~\in~\mathbb{N}$ possible outcomes. Fix $T \subset \R^m$ of possible preferences, called the \emph{type space}, where $t_i$ measures the value of outcome $i$. To avoid technicalities in pathological cases, we shall assume that $T$ is compact. The theory for general $T$ is discussed in Section \ref{sec:extensions}.

A \emph{mechanism} is a pair~$(g, p)$ consisting of an outcome function $g: T \to [m]$ and a payment vector $p \in \R^m$. At the beginning of the game, Nature chooses a true type $t^*\in T$ and communicates it privately to the agent.  The agent's action is then to declare a type~$s \in T$ to the mechanism, which may be different from the true type $t^\ast$. Upon announcing  $s$, the game ends with outcome $g(s)$, and payment $p_{g(s)}$ by the agent to the mechanism.
The agent, knowing $(g,p)$, will declare a type $s \in T$ that maximizes utility, 
\[u([g(s),p(s)], t^\ast) =  t^\ast_{ g(s)}
 - p_{g(s)}.\]
Mechanisms under which the utility maximizing strategy is truth-telling are called \emph{incentive compatible}.

\begin{definition}\label{defn:IC}
A mechanism $(g,p)$ is \emph{incentive compatible} (IC) if regardless of $t^\ast$ the agent always maximizes utility by declaring the true type. That is,
\begin{equation}\tag{IC}\label{eqn:single.type.simple}
	t^\ast_{g(t^*)}
 - p_{g(t^*)}\geq t^\ast_{ g(s)}
 - p_{g(s)} \quad  \mbox{ for all } \quad s,t^\ast \in T.
\end{equation}
An outcome function $g: T \to [m]$ is incentive compatible if there exists $p \in \R^m$ such that $(g,p)$ is IC. 
\end{definition}

We denote by $\img \subset [m]$ the image of $g$. The set of IC payments is defined to be 
\begin{equation}\label{eqn:pg}
\mathcal{P}(g) = \{(p_i: i \in \img) \in \R^{\img} \mbox{ such that } (g,p) \mbox{ is IC for some } p \in \R^m\}.
\end{equation}
For a fixed $p\in \R^m$, the set of \emph{IC outcome functions} supported by $p$ is
\begin{equation}\label{eqn:gp}
\mathcal{G}(p) = \{g: T \to [m] \mbox{ such that } (g,p) \mbox{ is } IC\}.
\end{equation}
For a fixed type space $T$, the set of \emph{IC outcome functions on $T$} is
$$ \mathcal{G}= \{g: T \to [m] \mbox{ such that } g \mbox{ is } IC\}. $$

\subsection{Connections to tropical convex geometry}\label{sec:connections}

We now formalize the discussion from Section \ref{sec:informal}. In the \emph{min-plus algebra} $(\mathbb{R}\cup\{+\infty\}, \underline{\oplus},\odot)$ addition and multiplication are defined by
$$a \underline{\oplus} b := \min(a,b), \quad a \odot b := a + b\quad  \mbox{ for } a,b \in \R\cup\{+\infty\}.$$
Analogously, the \emph{max-plus algebra} $(\mathbb{R}\cup\{-\infty\}, \overline{\oplus},\odot)$ is defined by
$$a \overline{\oplus} b := \max(a,b), \quad a \odot b := a + b\quad  \mbox{ for } a,b \in \R\cup\{-\infty\}.$$
All definitions stated in the min-plus algebra above have analogous definitions in the max-plus algebra. As we shall see, for certain economic concepts it is more natural to work with max, while for others it is more natural to work with min. Therefore we retain both algebras in this paper. 

As discussed above, a fixed $t \in \R^m$ defines the min-plus linear function $V_t$ in (\ref{eqn:pz.min}), interpreted as the negative of the equilibrium utility for an agent with true type $t$. For $I \subseteq [m]$, \emph{sector $I$ of the min-plus arrangement at $t$} is defined to be the closed cone
$$ \underline{\mathcal{H}}_I(-t) := \{z \in \R^m: z_i - t_i \leq z_j - t_j \mbox{ for all } i \in I, j \in [m]\}. $$
Clearly for any pair $I,J \subseteq [m]$, we have $\minH_{I\cup J}(-t) = \minH_I(-t) \cap \minH_J(-t)$, so the collection of these closed cones forms a polyhedral fan in $\R^m$, \cite[Sec. 7.1]{ziegler2012lectures}, called the \emph{min-plus arrangement of $t$}, and denoted $\minH(-t)$. For a subset $T \subset \R^m$ the \emph{min-plus arrangement of $T$} is the polyhedral complex obtained as the refinement of the fans $\minH(-t)$ for all $t \in T$,
$$ \minH(-T) := \bigwedge_{t\in T}\minH(-t). $$
Each closed cell $\sigma$ of $\minH(-T)$ is obtained as the intersection  \begin{equation}\label{eqn:sector.intersection} \sigma = \bigcap_{t \in T} \minH_{I(\sigma,t)}(-t)
\end{equation}  
where $I(\sigma,t) \subseteq [m]$ for each $t \in T$. The tuple $(I(\sigma,t): t \in T)$ uniquely specifies $\sigma$, and is called its \emph{min-plus covector}, denoted $\cov(\sigma)$. The \emph{min-plus covector of a point} $p \in \R^m$, denoted $\cov(p)$, is the covector of the smallest cell that contains $p$. The following lemma gives a convenient way to think of covectors as a collection of bipartite graphs.
\begin{lemma}\label{lem:cov.p}
For $p \in \R^m$, $\cov(p)$ is the bipartite graph with nodes $[m] \times T$, where $(i,t)\in [m]\times T$ is an edge of this graph if and only if $p\in \minH_i(-t)$. 
\end{lemma}
Identify an outcome function $g: T\to [m]$ with its graph $g$ on $[m] \times T$, where $g(t) = i$ if and only if $(i,t)$ is an edge of the graph $g$. The following proposition links the covector of $p$ with respect to $\minH(-T)$ to $\mathcal{G}(p)$. As discussed in Section \ref{sec:informal}, this is the fundamental geometric connection between incentive compatibility problems and tropical geometry.
 
\begin{proposition}\label{prop:covector}
Let $g: T \to [m]$ be an outcome function, and $p \in \R^m$ be a price vector. Then $g \in \mathcal G(p)$ if and only if $g$ is a subgraph of $\mathsf{\underline{coVec}}_T(p)$.
\end{proposition}

We now turn to the study of IC payments via the \emph{max-plus} tropical convex hull of the type set $T$. The appearance of max instead of min in this name is the consequence of the trivial min-max duality, namely, $\min(a,b) = -\max(-a,-b)$. 
\begin{definition}\label{def:tropical.hull}
The \emph{max-plus convex hull} generated by $T$, denoted $\tconv(T)$, is polyhedral complex consisting of all cells in $\minH(-T)$ which are bounded up to addition of constants. 
\end{definition}
This name is motivated by the fact that for a finite set $T$, $\tconv(T)$ coincides with the max-plus polytope generated by $T$, defined as the  max-plus linear span of the points in $T$ \cite[Prop. 4]{develin2004tropical}.
While every payment induces some IC outcome function, the following key result states that only payments that lie in $\tconv(T)$ are interesting for economic applications. 

\begin{definition}\label{def:p.lift}
Let $g: T \to [m]$ be an outcome function. The \emph{cell of $g$} is the unique maximal cell of $\tconv(T)$ whose covector contains $g$ as its subgraph. That is,
\begin{equation}\label{eqn:pg.tropical}
\Plift(g) := \{q \in \tconv(T): q \in \mathcal{\underline{H}}_i(-t) \mbox{ for all } (i, t) \in g\}.
\end{equation}
If no such cell exists, define $\Plift(g) =\emptyset$.
\end{definition}
Unlike $\mathcal{P}(g)$, which is a polytope in $\R^M$ for some $M < m$ when $g$ is not onto, the cell $\Plift(g)$ is always a polytope in $\R^m$ when $T$ is compact. 

Proposition \ref{prop:projection.payments} below shows that there is an isomorphism between $\mathcal{P}(g)$ and $\Plift(g)$. As we shall prove below, this isomorphism has many desirable properties. Most importantly, it implies that the polyhedral complex $\tconv(T)$ represents \emph{all possible} IC payment sets of \emph{all possible} IC outcome functions on $T$.

\begin{proposition}\label{prop:projection.payments}
Let $g: T \to [m]$ be an outcome function and $\img \subseteq [m]$ its image. The projection $\pi_g: \R^m \to \R^\img$ defined by
$$ \pi_M(p) = (p_i: i \in \img) $$
is an affine isomorphism from $\Plift(g)$ to $\mathcal{P}(g)$. In particular, the dimensions of $\mathcal P(g)$ and $\Plift(g)$ agree.
\end{proposition}
\begin{corollary}
The outcome function $g$ is IC if and only if its cell $\Plift(g)$ is non-empty.
\end{corollary}

Suppose we pick a vector $p \in \R^m$ and utilize Proposition \ref{prop:covector} to infer the set $\mathcal{G}(p)$ of all IC outcome functions supported by $p$. In general, $p$ needn't be in the set $\bigcup_{g \in \mathcal{G}(p)}\Plift(g)$ of all IC payments compatible with these outcome functions. By Proposition \ref{prop:projection.payments} this cannot happen if $p\in \tconv(T)$. However, even then, this set may not be a cell in the tropical polytope. This is only guaranteed if $\mathcal{G}(p)$ is a singleton. The following proposition shows when this is the case. It is an essential stepping stone for our effort to establish a well-defined correspondence between and outcome functions. The proposition also highlights the roles of the covector in finding $g$, and the role of $\tconv(T)$ in finding $\Plift(g)$.

\begin{proposition}\label{prop:covec.full.dim}
Let $p \in \R^m$. Then $\mathcal{G}(p)$ contains a single IC outcome $g$ if and only if $p$ lies in the interior of a full-dimensional cell $\sigma$ of $\minH(-T)$. In this case, $g$ is given by
\begin{equation}\label{eqn:g.sigma}
 g(t) = i \iff (i, t) \in \covec(\sigma),
\end{equation}
and $\Plift(g) = \sigma \cap \tconv(T)$.
\end{proposition}

\begin{remark}\label{rem:duality}
The switch between min and max, from hyperplanes to the convex hull is a consequence of $\min(a,b)=-\max(-a, -b)$. It is not to be confused with the duality theorem between the $V$ and $H$-representation of tropical polytopes, which states that tropical max-plus polytopes can be written as the intersection of min-plus hyperplanes, as well as the max-plus convex hull of a unique minimal generating set, called its tropical vertices \cite{develin2004tropical,gaubert2011minimal,joswig2005tropical}.   Similarly min-plus polytopes have a $V$- and $H$-representation. 
In fact, the set of IC payments $\mathcal P(g)$ is both a min- and a max-plus polytope \cite{joswig2010tropical, joswig2016weighted}. Its $H$-representation as a \emph{max-plus} polytope is precisely the cell of $g$. In contrast, its $V$-representation as a \emph{min-plus} polytope is given in Theorem \ref{thm:tropicaleigenspacegenerator}.
\end{remark}

\begin{remark}\label{rem:affine.space}
All tropical arrangements have a common lineality space $$\R\cdot(1,\ldots, 1) = \R \odot (0, \ldots, 0).$$ Hence so does the tropical polytope and the set of IC payments. In tropical geometry, it is customary to work modulo this lineality space. This leads to the $(m-1)$-dimensional \emph{tropical affine space} $\TA^{m-1}$,
$$\mathbb{TA}^{m-1} \equiv \R^m / \R \cdot (1, \ldots, 1).$$ 
For figures, we shall identify $\TA^{m-1}$ with $\R^{m-1}$ via the homeomorphism 
\begin{equation}\label{eqn:tp.to.r}
[x]:=\{a \odot (x_1, \ldots, x_m): a \in \R\} \in \TA^{m-1} \mapsto (x_2-x_1, \ldots, x_m-x_1) \in \R^{m-1}.
\end{equation}
Tropical affine space is the natural ambient space to study incentive problems. In economic terms, this means that only relative valuation of the agent matter for truthfulness. The notion of `type' in $\TA^{m-1}$, which is formally a coset, is more accurately described as \emph{incentive type}. However, we will not afford this distinction trusting that no confusion will arise. 
\end{remark}

\subsection{Examples}
Examples \ref{ex:arrangement} and \ref{ex:covectors} below illustrate the definitions of sectors, tropical arrangements and covectors. Example \ref{ex:construction1} foreshadows our main results. 

\begin{example}[Tropical arrangements and sectors]\label{ex:arrangement}
Figure \ref{fig:tropical.hyperplanes}(\textsc{a}) depicts the arrangement $\minH(-t_0)$ and its sectors in $\TA^{2}$ identified with $\R^2$ via (\ref{eqn:tp.to.r}). The three lines are the three lower-dimensional sectors $\minH_{12}(-t_0)$, $\minH_{13}(-t_0)$, and $\minH_{23}(-t_0)$. The apex $t_0=\minH(-t)_{\{1,2,3 \}}$ is where these sectors meet. Panel (\textsc{b}) shows the max-plus arrangement $\maxH(-t_0)$ together with its sectors.
\end{example}

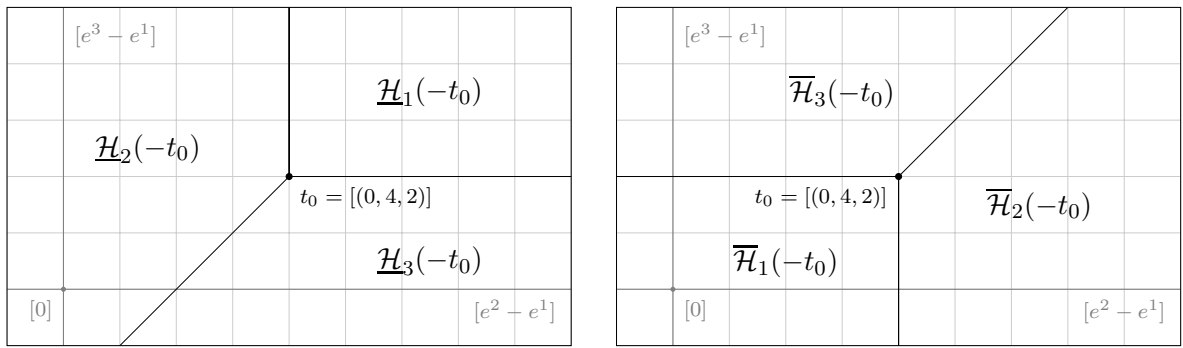
\begin{figure}[h]
    \centering

    \begin{subfigure}[t]{0.45\textwidth}
        \centering
        \begin{tikzpicture}[scale=0.75]
		\draw[help lines, step=1, black!20] (0,0) grid (10,6);
		\draw[black!50] (1,0) -- (1,6);
		\node[right, black!50] at (1,5.5) {\tiny $[e^3-e^1]$};
		\draw[black!50] (0,1) -- (10,1);
		\node[below, black!50] at (9,1) {\tiny $[e^2-e^1]$};
		\filldraw[black!50] (1,1) circle (1pt);
		\node[below left, black!50] at (1,1) {\tiny $[0]$};
		
		\filldraw (5,3) circle (1.5pt);
		\draw [line width=0.1mm](2,0) -- (5,3) -- (5,6) -- (5,3) -- (10,3);
		\node[below right] at (5,3) {\tiny $t_0= [(0, 4,2)]$};

		\node at (7.5,4.5) {\small $\minH_1(-t_0)$};
		\node at (7.5,1.5) {\small $\minH_3(-t_0)$};
		\node at (2.5,3.5) {\small $\minH_2(-t_0)$};
		\draw[line width=0.1mm] (0,0) rectangle (10,6);
\end{tikzpicture}
        \caption{A min-plus arrangement with its sectors.}
    \end{subfigure}%
    \hspace{0.5cm} 
    \begin{subfigure}[t]{0.45\textwidth}
        \centering
        \begin{tikzpicture}[scale=0.75]
		\draw[help lines, step=1, black!20] (0,0) grid (10,6);
		\draw[black!50] (1,0) -- (1,6);
		\node[right, black!50] at (1,5.5) {\tiny $[e^3-e^1]$};
		\draw[black!50] (0,1) -- (10,1);
		\node[below, black!50] at (9,1) {\tiny $[e^2-e^1]$};
		\filldraw[black!50] (1,1) circle (1pt);
		\node[below right, black!50] at (1,1) {\tiny $[0]$};

		\filldraw (5,3) circle (1.5pt);
		\draw [line width=0.1mm](8,6) -- (5,3) -- (5,0) -- (5,3) -- (0,3);
		\node[below left] at (5,3) {\tiny $t_0= [(0, 4,2)]$};	
		
		\node at (3,1.5) {\small $\maxH_1(-t_0)$};
		\node at (7.5,2.5) {\small $\maxH_2(-t_0)$};
		\node at (4,4.5) {\small $\maxH_3(-t_0)$};
		
		\draw[line width=0.1mm] (0,0) rectangle (10,6);
\end{tikzpicture}
        \caption{A max-plus arrangement with its sectors.}
    \end{subfigure}
\caption{Min-plus and max-plus arrangements with labeled sectors in two-dimensional tropical affine space $\TA^2$.}\label{fig:tropical.hyperplanes}
\end{figure}

\begin{figure}[b]
\centering
\begin{tikzpicture}[scale=0.75]
		\draw[help lines, black!20] (0,0) grid (21,9);
		
		\filldraw [black!12, fill opacity=0.7] (2.95,7.07) -- (18.05,7.07) -- (18.05,4.95) -- (14.05,0.95) -- (10.95,0.95)--(10.95,2.95)--(5.95,2.95)--(5.95,6.93)--(2.95,6.93);
		
		\draw[black!50] (1,0) -- (1,9);
		\node[right, black!50] at (1,8.5) {\tiny $[e^3-e^1]$};
		\draw[black!50] (0,1) -- (21,1);
		\node[below, black!50] at (20,1) {\tiny $[e^2-e^1]$};
		\filldraw[black!50] (1,1) circle (1pt);
		\node[below right, black!50] at (1,1) {\tiny $[0]$};

		\filldraw (3,7) circle (2.5pt);
		\draw [line width=0.1mm](0,4) -- (3,7) -- (3,9) -- (3,7) -- (21,7);
		\node[above right] at (3,7) {$t_2$};
		
		\filldraw (6,3) circle (2.5pt);
		\draw [line width=0.1mm](3,0) -- (6,3) -- (6,9) -- (6,3) -- (21,3);
		\node[above right] at (6,3) {$t_1$};
		
		\filldraw (11,1) circle (2.5pt);
		\draw [line width=0.1mm](10,0) -- (11,1) -- (11,9) -- (11,1) -- (21,1);
		\node[above left] at (11,1) {$t_3$};

		\filldraw (18,5) circle (2.5pt);
		\draw [line width=0.1mm](13,0) -- (18,5) -- (18,9) -- (18,5) -- (21,5);
		\node[above right] at (18,5) {$t_4$};
		
		\node at (8.5,5) {\tiny $\begin{pmatrix} 1&0&0&0\\0&0&1&1\\0&1&0&0\end{pmatrix}$};
		\node at (3.5,5) {\tiny $\begin{pmatrix} 0&1&0&0\\1&0&1&1\\0&1&0&0\end{pmatrix}$};
		\node at (14,5) {\tiny $\begin{pmatrix} 1&0&1&0\\0&0&0&1\\0&1&0&0\end{pmatrix}$};
		\node at (12.9,2) {\tiny $\begin{pmatrix} 0&0&1&0\\0&0&0&1\\1&1&0&0\end{pmatrix}$};
		\node at (8,1.5) {\tiny $\begin{pmatrix} 1&0&1&0\\0&0&1&1\\1&1&0&0\end{pmatrix}$};

	\draw [line width=0.2mm, dotted](5.5,0) -- (5.5,7) -- (7.5,9);
	\draw [line width=0.2mm, dotted](0,7) -- (5.5,7); 
	
	\draw [line width=0.2mm, dotted](16,0) -- (16,4) -- (21,9);
		\draw [line width=0.2mm, dotted](0,4) -- (16,4);
		
		\draw [line width=0.2mm, dotted](11,0) -- (11,2.5) -- (17.5,9);
		\draw [line width=0.2mm, dotted](0,2.5) -- (11,2.5);

		\draw
    [->,thin, dashed](5,5) to [out=0, in=225] (5.3,6.8);

		\draw
    [->,thin, dashed](9.5,1.5) to [out=0, in=225] (10.9,2.9);

		\filldraw (16,4) circle (1pt);
		\node[below left] at (16,4) { $p$};
		\filldraw (5.5,7) circle (1pt);
		\node[above] at (5.5,7)  {$r$};
		\filldraw (6,7) circle (1pt);
		\node[above right] at (6,7)  {$u$};
		\filldraw (11,3) circle (1pt);
		\node[above right] at (11,3) { $s$};
		\filldraw (11,2.5) circle (1pt);
		\node[right] at (11,2.5) { $q$};
	
		\draw[line width=0.1mm] (0,0) rectangle (21,9);
\end{tikzpicture}
\caption{A min-plus arrangement on four points labeled  with some of its covectors. The tropical polytope generated by the points is shaded gray.}\label{fig:hyperplane.arrangement}
\end{figure}
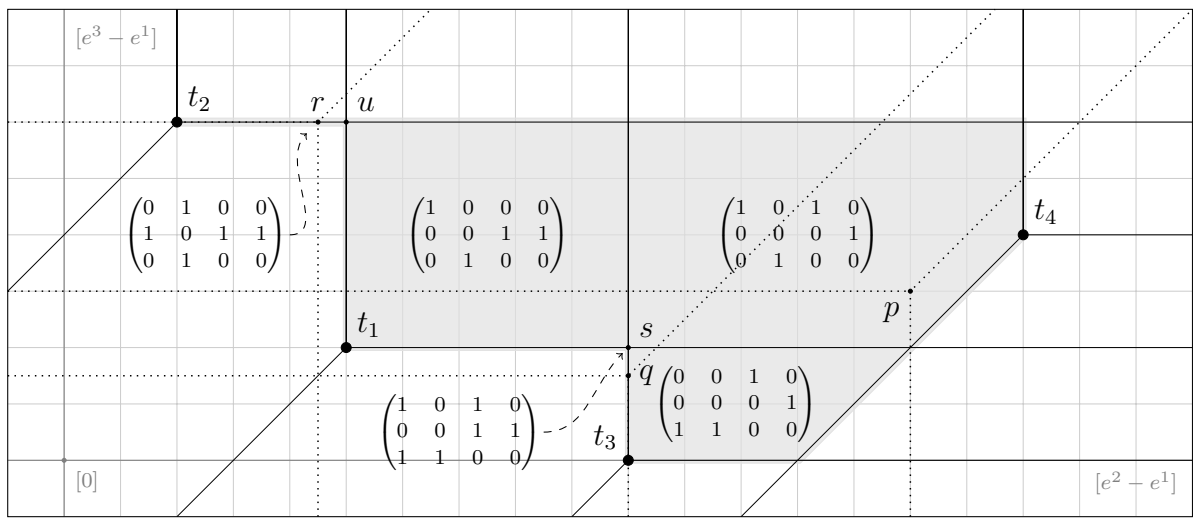

\begin{example}[Covectors, Tropical Polytopes]\label{ex:covectors} 
	
Figure \ref{fig:hyperplane.arrangement} depicts type set $T=\{t_1, t_2, t_3, t_4\}$ together with the min-plus arrangement $\minH(-T)$. The subcomplex of bounded cells is shaded gray. It is the tropical polytope generated by $T$, $\tconv(T)$. Matrices in the figure are the covectors of a selection of cells in $\tconv(T)$. Consider the matrices
\begin{equation*}
	h=\begin{pmatrix}
0&0&0&0\\ 0&0&1&1\\1&1&0&0	
	\end{pmatrix},\quad 
	h'=\begin{pmatrix}
0&0&1&0\\ 0&0&1&1\\1&1&0&0	
	\end{pmatrix}.
	\end{equation*} 
The matrix $h$ defines a cell $\tau$ of $\tconv(T)$ by restricting $\minH_3(-t_1)\cap \minH_3(-t_2)\cap\minH_2(-t_3)\cap\minH_2(-t_4)$ to $\tconv(T)$, provided the intersection is not empty. In fact, $\tau$ is the line connecting the points labeled $s$ and $t_3$. But the covector of this cell is $h'$ which is different from $h$. Thus $h$ is not a covector of $\tconv(T)$.
\end{example}

\begin{example}[Geometric and algebraic construction of IC mechanisms]\label{ex:construction1} This example illustrates Propositions \ref{prop:covector} and \ref{prop:projection.payments}. In Figure \ref{fig:hyperplane.arrangement}, let $\sigma$ be the cell of $\tconv(T)$ that contains $p$. Since $\sigma$ is full-dimensional, by Proposition \ref{prop:covec.full.dim}, there is a unique IC $g: T \to [3]$ with this payment $p$, namely, $$g(t_1)=g(t_3)=1, g(t_4)=2, g(t_2)=3.$$ The max-plus arrangement $\maxH(-p)$ at $p$ (dotted) can be used to define $g$ geometrically: $g(t)~=~i$ if and only if $t\in T\cap \maxH_i(-p)$. Since $g$ is onto, by Proposition \ref{prop:projection.payments}, $P(g)=\sigma$. 
\end{example}

\begin{example}[Construction of IC mechanisms (cont.)]\label{ex:construction2}
In Figure \ref{fig:hyperplane.arrangement}, consider the point $r$. Let $\nu$ be the cell of $\tconv(T)$ that contains $r$ in its relative interior. This cell is the line segment between $t_2$ and $u$. The graph that corresponds to the covector of $\nu$ has two subgraphs defining outcome functions on $T$, whose adjacency matrices are
\begin{equation*}
	f_1=\begin{pmatrix}
0&0&0&0\\1&0&1&1\\0&1&0&0
	\end{pmatrix},\quad 
	f_2=\begin{pmatrix}
0&1&0&0\\1&0&1&1\\0&0&0&0
	\end{pmatrix}.
	\end{equation*} 
By Proposition \ref{prop:covector}, $\mathcal G(r)$ consists of two IC outcome functions $f_1, f_2: T \to [3]$, with
$$f_1(t_2)=1, f_1(t_1)=f_1(t_3)=f_1(t_4)=2, \quad \mbox{ and } \quad f_2(t_2)=3, f_2(t_1)=f_2(t_3)=f_2(t_4)=2.$$ 
Again, $\maxH(-r)$ can be used to determine $\mathcal{G}(r)$. Here, $t_2\in T\cap \maxH_1(-r)\cap\maxH_3(-r)$, reflecting the fact that $t_2$ can be assigned to either outcome 1 or 3 without violating IC. Then, modulo addition of constants, $\mathcal P(f_1)\subset \R^{\{1,2\}}$, and $\mathcal P(f_2)\subset \R^{\{2,3\}}$ are one-dimensional polytopes, which can be lifted to $\Plift(f_1)=\Plift(f_1)=\nu$ in $\R^3$ by Proposition \ref{prop:projection.payments}. The lifting preserves the dimension of the cells. 
\end{example}

\section{Main results}\label{sec:main}

\subsection{Global structure of IC outcome functions  and payments}
Theorem \ref{thm:main.generic} below gives a characterization for $\mathcal{G}$ and $\mathcal{P}$ in terms of $\minH(-T)$ and $\tconv(T)$. This result solves for the global structure of the collection of all IC outcome functions on $T$. To avoid technicalities, we state the case when $T$ is tropically generic. The general case can be reduced to this case through generic perturbations, and is treated in Section \ref{sec:extensions}. A set $T \subset \R^{m}$ is \emph{tropically generic} if there is no subset of $2 \leq k \leq m$ points in $T$ whose projection onto $k$ coordinates lie on a tropical hyperplane in $\R^{k}$. For example, finitely many points picked independently from some continuous distribution in $\R^m$ are almost surely generic.

\begin{theorem}\label{thm:main.generic}
Fix a tropically generic type space $T \subset \R^m$. Then $g$ is an IC outcome on $T$ if and only if it equals $\covec(\sigma)$ for some full-dimensional cell $\sigma$ of $\minH(-T)$. In this case, $\Plift(g) = \sigma \cap \tconv(T)$. 
\end{theorem}
\begin{corollary}\label{cor:cell.mechanism.bijection}
Fix a tropically generic type space $T \subset \R^m$. Then $g\in \mathcal G$ is an onto IC outcome function if and only if its corresponding cell $\sigma$ is a full-dimensional cell of $\tconv(T)$. 
\end{corollary}

Proposition \ref{prop:covec.full.dim} shows that each full-dimensional cell of $\minH(-T)$ gives rise to a unique IC outcome function $g$. The above theorem completes the characterization by showing that on generic $T$ any IC outcome function is associated to a cell. Thus we obtain a bijection between full dimensional cells and IC outcome functions. 
Furthermore, when $T$ is generic, the number of full-dimensional cells of $\minH(-T)$ and $\tconv(T)$ only depends on the cardinality of $T$ and the dimension $m$ \cite[Cor. 25]{develin2004tropical}. Consequently, if two generic type spaces have an equal number of types, then so is the number of IC, and onto IC outcome functions.

\begin{corollary}\label{cor:cardinality}
Suppose $T \in \R^m$ is tropically generic and contains $r$ points. Then the number of IC outcome functions on $T$ equals $\binom{r+m-1}{m-1}$, and the number of onto IC outcome functions on $T$ equals $\binom{r+m-4}{m-1}$.
\end{corollary}

\begin{figure}[b]
\centering
\begin{tikzpicture}[scale=0.75]
		
		\filldraw [black!12, fill opacity=0.7] (2.95,7.07) -- (18.05,7.07) -- (18.05,4.95) -- (14.05,0.95) -- (10.95,0.95)--(10.95,2.95)--(5.95,2.95)--(5.95,6.93)--(2.95,6.93);	
		
		\filldraw (3,7) circle (2.5pt);
		\draw [line width=0.1mm](0,4) -- (3,7) -- (3,9) -- (3,7) -- (21,7);
		\node[above right] at (3,7) {$t_2$};
		
		\filldraw (6,3) circle (2.5pt);
		\draw [line width=0.1mm](3,0) -- (6,3) -- (6,9) -- (6,3) -- (21,3);
		\node[above right] at (6,3) {$t_1$};
		
		\filldraw (11,1) circle (2.5pt);
		\draw [line width=0.1mm](10,0) -- (11,1) -- (11,9) -- (11,1) -- (21,1);
		\node[above left] at (11,1) {$t_3$};

		\filldraw (18,5) circle (2.5pt);
		\draw [line width=0.1mm](13,0) -- (18,5) -- (18,9) -- (18,5) -- (21,5);
		\node[above right] at (18,5) {$t_4$};
		
		\node at (8.5,5) {$g_1$};
		\node at (14,5) {$g_2$};
		\node at (13,2) {$g_3$};
		
		\node at (4.5,8) {$h_1$};
		\node at (8.5,8) {$h_2$};
		\node at (14,8) {$h_3$};
		\node at (19.5,8) {$h_4$};
		\node at (19.5,6) {$h_5$};
		\node at (19.5,4) {$h_6$};
		\node at (18.3,2) {$h_7$};
		\node at (17.5,0.5) {$h_8$};
		\node at (12,0.5) {$h_9$};
		\node at (7.5,2) {$h_{10}$};
		\node at (4,5) {$h_{11}$};
		\node at (1.5,7.5) {$h_{12}$};
		\draw[line width=0.1mm] (0,0) rectangle (21,9);
\end{tikzpicture}
\caption{A min-plus arrangement on four points. The tropical polytope generated by the points is shaded gray. All full-dimensional cells of the arrangement are labeled by their covectors.}\label{fig:onto.mechanisms}
\end{figure}
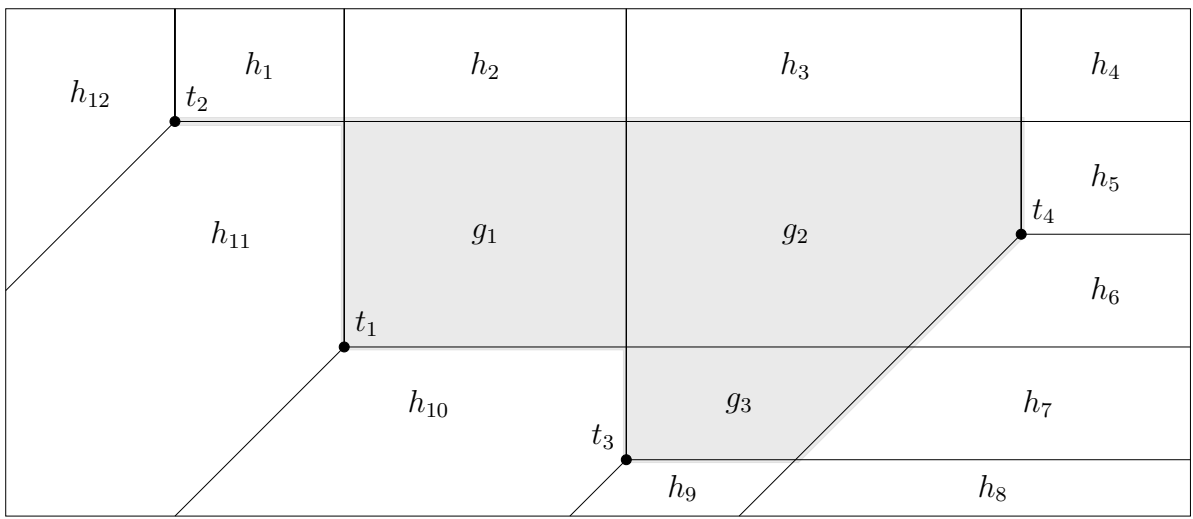

\begin{example}\label{ex:onto.mechanisms}
Consider Figure \ref{fig:onto.mechanisms}, which reproduces parts of Example \ref{ex:covectors}. Again, the tropical polytope on $T=\{t_1, t_2, t_3, t_4 \}$ is shaded grey. The type space is generic, since no three points lie on the same hyperplane, and no two points project onto the same point on either of the axes. There are three bounded, full-dimensional cells, and thus exactly three onto mechanism by Theorem \ref{thm:main.generic},
\begin{equation*}
	g_1=\begin{pmatrix}
1&0&0&0\\0&0&1&1\\0&1&0&0
	\end{pmatrix},\quad 
	g_2=\begin{pmatrix}
1&0&1&0\\0&0&0&1\\0&1&0&0
	\end{pmatrix},\quad 
	g_3=\begin{pmatrix}
0&0&1&0\\1&1&0&0\\0&0&0&1
	\end{pmatrix}.
	\end{equation*} 
Moreover, there are 12 non-surjective mechanisms, each associated with one full-dimensional cell of $\minH(-T)$. 
\begin{align*}
	&h_1=\begin{pmatrix}
1&0&0&0\\ 0&1&1&1\\0&0&0&0	
	\end{pmatrix},\quad 
	&h_2=\begin{pmatrix}
1&1&0&0\\ 0&0&1&1\\0&0&0&0	
	\end{pmatrix}, \quad 
	&h_3=\begin{pmatrix}
1&1&1&0\\ 0&0&0&1\\0&0&0&0	
	\end{pmatrix},\quad 
	&h_4=\begin{pmatrix}
1&1&1&1\\ 0&0&0&0\\0&0&0&0	
	\end{pmatrix}\\
	&h_5=\begin{pmatrix}
1&1&0&1\\ 0&0&0&0\\0&0&1&0	
	\end{pmatrix},\quad 
	&h_6=\begin{pmatrix}
1&0&1&0\\ 0&0&0&0\\0&1&0&1	
	\end{pmatrix}, \quad 
	&h_7=\begin{pmatrix}
0&0&1&0\\ 0&0&0&0\\1&1&0&1	
	\end{pmatrix},\quad 
	&h_8=\begin{pmatrix}
0&0&0&0\\ 0&0&0&0\\1&1&1&1	
	\end{pmatrix}\\
	&h_9=\begin{pmatrix}
0&0&0&0\\ 0&0&0&1\\1&1&1&0	
	\end{pmatrix},\quad 
	&h_{10}=\begin{pmatrix}
0&0&0&0\\ 0&0&1&1\\1&1&0&0	
	\end{pmatrix}, \quad 
	&h_{11}=\begin{pmatrix}
0&0&0&0\\ 1&1&0&1\\0&0&1&0	
	\end{pmatrix},\quad 
	&h_{12}=\begin{pmatrix}
0&0&0&0\\ 1&1&1&1\\0&0&0&0	
	\end{pmatrix}
	\end{align*} 
	Note that for these mechanisms, the set of IC payments can be identified with the intersection of the corresponding cell in $\minH(-T)$ with the tropical polytope $\tconv(T)$. Thus, as Corollary \ref{cor:cardinality} predicts, there are 15 mechanisms on $T$ in total, and 3 onto mechanisms. 
\end{example}

\subsection{Tropical algebra of IC payments}\label{sec:payments.algebra}
We now apply classical theorems on the decomposition of tropical eigenspaces to obtain new results on the structure of $\mathcal{P}(g)$. Let $g:T\to [m]$ be an outcome function with image set $\img \subseteq [m]$, whose cardinality we denote by $\cardg$. The \emph{allocation matrix} $L^g \in \R^{\img \times \img}$ of $g$ is defined by
\begin{equation}\label{eqn:allocation.matrix}
L^g_{jk} := \inf_{t \in g^{-1}(j)}\{t_j - t_k\} \quad \mbox{ for all } j,k \in \img.
\end{equation} 
Algebraic manipulations show that 
\begin{equation}\label{eqn:pg.allocation}
\mathcal{P}(g) = \{p \in \R^\img: p_i - p_j \leq L^g_{ij}, \forall~i,j\in \img\} = \{p \in \R^\img:  \min_{j=1,\ldots,m} L^g_{ij} + p_j = p_i, \forall \, i \in \img\}.
\end{equation}
Rewriting (\ref{eqn:single.type.simple}) with matrix-vector and scalar-vector multiplication carried out in the min-plus algebra we obtain \begin{equation}\label{eqn:single.min.plus}
\Eig(L^g)=\{p \in \R^\img: L^g \,\,\underline{\odot}\,\, p = 0\,\, \underline{\odot}\,\, p\},
\end{equation}
which is the min-plus eigenspace of the allocation matrix $L^g$ with eigenvalue zero. Therefore, one obtains the following key result.

\begin{corollary}\label{cor:payments}
Let $g: T \to [m]$ be an outcome function. Then $\mathcal P(g)$ equals the min-plus eigenspace of $L^g$ with eigenvalue zero.  
\end{corollary}

It turns out that the min-plus eigenvalue of a $L^g$ matrix is unique and equals the minimum normalized cycle in the graph with edge weights $L^g$ \cite{Cg62}. Therefore, one recovers Rochet's cyclical monotonicity condition for incentive compatibility \cite{rochet1987necessary}, namely, that $g$ is IC if and only if all cycles of $L^g$ are nonnegative.

\begin{theorem}[\cite{Cg62}, \cite{rochet1987necessary}]\label{thm:rochet}
An outcome function $g$ is IC if and only if the allocation matrix $L^g$ has min-plus tropical eigenvalue zero, or equivalently, all cycles of $L^g$ are nonnegative. 
\end{theorem}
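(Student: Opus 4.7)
The plan is to deduce Theorem \ref{thm:rochet} by combining two ingredients already laid out in the excerpt: the algebraic reformulation of (IC) as the min-plus fixed point equation (\ref{eqn:single.min.plus}), and Cuninghame-Green's theorem on the uniqueness (and existence) of the min-plus eigenvalue. The bulk of the work is to verify the former carefully; the theorem then follows as a short bookkeeping step.

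First I would establish in detail the equivalence asserted after equation (\ref{eqn:single.min.plus}). Applying (\ref{eqn:single.type.simple}) with both orderings $(s,t^*)$ and $(t^*,s)$ to any pair of types in the same level set $g^{-1}(j)$ shows that $p$ is constant on each $g^{-1}(j)$, so $p$ descends to a vector $p = (p_1,\ldots,p_m) \in \R^m$. Rewriting (\ref{eqn:single.type.simple}) as $p_{g(s)} - p_{g(t^*)} \geq t^*_{g(s)} - t^*_{g(t^*)}$, then fixing $j,k \in [m]$ and taking the supremum over $s \in g^{-1}(k)$ and $t^* \in g^{-1}(j)$ yields the equivalent system
\[
p_j - p_k \leq L^g_{jk} \qquad \text{for all } j,k \in [m],
\]
which is exactly $L^g \,\underline{\odot}\, p \geq p$ componentwise. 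Since $L^g_{jj} = 0$ by definition, choosing $k = j$ gives the reverse inequality, so (\ref{eqn:single.type.simple}) is equivalent to $L^g \,\underline{\odot}\, p = p$, i.e.\ to $p$ being a min-plus eigenvector of $L^g$ with eigenvalue $0$.

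With this equivalence in hand, the theorem drops out of Cuninghame-Green's result. If $g$ is IC, the preceding paragraph produces a $p \in \R^m$ exhibiting $0$ as a min-plus eigenvalue of $L^g$; uniqueness of the eigenvalue forces $\underline{\lambda}(L^g) = 0$. Conversely, if $\underline{\lambda}(L^g) = 0$, Cuninghame-Green's theorem also guarantees that $\underline{\Eig}(L^g)$ is non-empty, and any $p$ in it produces an IC mechanism $(g,p)$ by reading the reformulation backwards.

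The only place that requires care is the supremum step: the right-hand side $t^*_k - t^*_j$ must be replaced by $-L^g_{jk}$, and this is legitimate precisely because the standing finiteness assumption on $L^g$ guarantees that the infimum in (\ref{eqn:allocation.matrix}) lies in $\R$. Beyond this, the argument is purely formal; the real substance is hidden in Cuninghame-Green's theorem, which is cited rather than reproved.
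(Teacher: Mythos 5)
Your proposal is correct and follows exactly the route the paper takes: it makes precise the ``algebraic manipulations'' by which the paper reduces (IC) to the fixed-point equation $L^g \,\underline{\odot}\, p = p$ (using that $g$ is onto, that $p$ is constant on the level sets $g^{-1}(j)$, and that $L^g_{jj}=0$ forces the reverse inequality), and then invokes Cuninghame-Green's existence and uniqueness of the min-plus eigenvalue, which is precisely how the paper derives Theorem~\ref{thm:rochet} from equation~(\ref{eqn:single.min.plus}).
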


Suppose $L^g \in \R^{\img \times \img}$ has min-plus eigenvalue zero. The \emph{Kleene star} of $L^g$ is the matrix
$$(L^g)^\ast = I \underline{\oplus} \Bigg(\underline{\bigoplus}_{i=1}^{\cardg} (L^g)^{\underline{\odot} i}\Bigg),$$ where $I$ denotes the min-plus identity matrix with zero on the diagonal and $+\infty$ on the off-diagonal. Kleene stars are also known as distance matrices \cite{murota2003discrete}, or strong transitive closures \cite[\S 1.6.2.1]{butkovivc2010max}. Computing the Kleene star is equivalent to computing the all-pairs shortest path in the graph with edge weights $L^g$, which can be cast as a linear program \cite{ahuja1993network}. Classical results in tropical linear algebra \cite[Thm. 5.1.3]{maclagan2015introduction}, \cite[Thm 4.2.4]{butkovivc2010max} immediately translate into the following algebraic characterization of IC payments.
\begin{definition}
Let $g:T\to [m]$ be an IC outcome function and $L^g \in \R^{\img \times \img}$ its allocation matrix. The set of \emph{tropical generating payments} $V(\mathcal{P}(g))$ of $\mathcal{P}(g)$ is the set of columns of $(L^g)^*$ viewed as points in $\mathbb{TA}^{\cardg-1}$.
\end{definition}
\begin{theorem}\label{thm:tropicaleigenspacegenerator}
Let $g:T\to [m]$ be an IC outcome function. Each $p \in \mathcal{P}(g)$ can be written as a min-plus linear combination of vectors in $V(\mathcal{P}(g))$, that is,
$$ p = \underline{\bigoplus}_{v \in V(\mathcal{P}(g))} a_v \odot v $$ 
for some $a_v \in \R$. Furthermore, the cardinality of $V(\mathcal{P}(g))$ equals the dimension of $\mathcal{P}(g)$. 
\end{theorem}
A brief discussion is in order. The theorem states that a price is IC if and only if it can be written as a \emph{min-plus} linear combination of the tropical generating payments. Indeed, adding constants to payments does not violate IC, neither does taking point-wise minima of IC payments. The surprising part is that \emph{all} payments are obtainable this way. Note that this is only true using tropical algebra. In fact, tropical generating payments are ordinary vertices of $\mathcal{P}(g)$, but the converse is not true. The number of ordinary vertices of $\mathcal{P}(g)$ ranges between $1$ and $\binom{2\cardg-2}{\cardg-1}$ \cite[Prop. 19]{develin2004tropical}, while the number of tropical generating payments equals the dimension which is always between $1$ and $\cardg$. Thus the set of IC payments $\mathcal P(g)$ thus forms a \emph{tropical simplex} spanned by the generating payments.

The $i$-th column of $(L^g)^*$ consists of the shortest paths from each of the $d$ nodes towards node $i$, that is
\begin{equation}\label{eqn:geometric.kleene.star} ( L^g)^*_{ij}=\sup\{p_i-p_j~:~p\in\mathcal{P}(g)\}, \qquad \text{for all}~i,j\in \img.
\end{equation}
In economics terms, the entries of the Kleene star are the maximal payment differentials between outcomes consistent with IC of $g$, by this we mean the following. Fixing a price for outcome $i$, the number $(L^g_{ij})^*$ is the maximal payment difference the mechanism can charge between outcome $i$ and $j$. If the difference were higher, then some type in $g^{-1}(j)$ would find it optimal to deviate to some outcome $k\neq j$, thereby violating IC.

\begin{example}[Generating Payments]\label{ex:generating.payments}
	Consider the outcome function $g$ from Example \ref{ex:construction1}. Its IC payment simplex $\mathcal P(g)$ is the cell containing $p$ in its interior in Figure \ref{fig:hyperplane.arrangement}. The generators can be calculated using (\ref{eqn:geometric.kleene.star}) and are given by $v_1=[(0, 17, 6)]$, $v_2=[(0, 10, 6)]=[(-10, 0, -4)]$, $v_3=[(0, 15,2)]=[(-2, 13, 0)]$. If $p'=(0,17,6)\in v_1$ is the IC price for the mechanism $(g,p)$, then $p'_2=17$ is the maximal price that supports outcome 2. If it were increased, then $t_4$ would deviate and announce $t_1$ or $t_3$. By Theorem \ref{thm:tropicaleigenspacegenerator}, the payment simplex $\mathcal P(g)$ equals the \emph{min-plus} linear span of the generators $v_1, v_2, v_3$ or equivalently the union of bounded cells in the \emph{max-plus} arrangement on these points, see Remark \ref{rem:duality}. Both characterize the min-plus convex hull $\underline{\mathsf{tconv}}(v_1, v_2, v_3)$ by \cite[Thm. 15]{develin2004tropical}. For instance, the IC price $p=[(0, 15, 3)]$ depicted in the figure can be expressed as a min-plus linear combination as $p=5\odot v_1 \underline \oplus 0\odot v_2\underline\oplus 1\odot v_3$. However, the classical convex hull of the generating payments is a strict subset of $\mathcal{P}(g)$. 
\end{example}

The set of generating payments $V(\mathcal{P}(g))$ is given by the columns of the Kleene star $(L^g)^\ast$ viewed as vectors in $\mathbb{TA}^{\cardg-1}$. While the Kleene star $(L^g)^\ast$ has $\cardg$ columns, two columns which differ by a constant entry-wise are the same point in $\mathbb{TA}^{\cardg-1}$. Hence, the cardinality of $V(\mathcal{P}(g))$ is between $1$ and $\cardg$. We now give a geometric characterization of the number of generating payments.
\begin{definition}\label{defn:dimension}
For $p \in \TA^{m-1}$, the \emph{dimension graph} of $p$ with respect to $T$ is the graph on $m$ nodes, with edge $\{i,j\}$ if and only if
\begin{equation}\label{eqn:dimension.graph} T\cap \maxH_{ij}(-p)\neq \emptyset.\end{equation}  
\end{definition}

Observe that for each pair $\{i,j\}$, the sector $\maxH_{ij}(-p)$ is a translation of the sector $\maxH_{ij}(0)$ by~$p$. Hence calculating the dimension graph of a point allows for a simple geometric algorithm by considering translations of the sets $\maxH_{ij}(0)$ and checking whether they intersect $T$, or conversely, translating $T$ and checking whether it intersects $\maxH_{ij}(0)$. 

\begin{theorem}\label{thm:dimension}
Let $g:T\to[m]$ be an IC outcome function and $p$ be a point in the relative interior of $\Plift(g)$. Then the number of generating payments equals the number of connected components in the dimension graph of $p$ with respect to $T$.
\end{theorem}

\begin{example}[Dimension graph and number of generating payments] \label{ex:dimension} 
We continue with the setting of Example \ref{ex:generating.payments}.
The dimension graph of $p$ is given by 3 nodes with no edges, since $T\cap \maxH_{ij}(-p)=\emptyset$ for all $i\neq j$, see Figure \ref{fig:hyperplane.arrangement} . By Theorem \ref{thm:dimension} we conclude that there are exactly three generating payments for $g$, which were calculated in Example \ref{ex:generating.payments}. Recall that the cell $\mathcal P(g)$ is depicted in $\TA^2$, where its dimension is two. Accounting for the quotiented lineality space $(0, 0, 0)\odot \R$, the cell has dimension three in $\R^3$. So this number agrees with the dimension of $\mathcal P (g)$ as a polyhedron in $\R^3$ as expected from Theorem \ref{thm:tropicaleigenspacegenerator}.
\end{example}

By Theorem \ref{thm:tropicaleigenspacegenerator} the generating payments $V(\mathcal{P}(g))$ are the unique generators of the tropical eigenspace of $L^g$. Theorem \ref{thm:dimension} characterizes their number in terms of a geometric property of the cell $\Plift(g)$.  The representation in terms of generators is dual to the hyperplane description of the price simplex from Proposition \ref{prop:projection.payments}, \cf Remark \ref{rem:duality}. In other words $V(\mathcal{P}(g))$ is also the set of tropical vertices of $\mathcal{P}(g)$. Yet it is the geometric aspect manifested in Theorem~\ref{thm:dimension} and the calculation in (\ref{eqn:geometric.kleene.star}) that makes the characterization of Theorem \ref{thm:tropicaleigenspacegenerator} appealing, for it does not require reference to $L^g$ but instead provides a direct link to the geometry of $T$ via $\tconv(T)$. The tropical apparatus is uniquely suited for this task by bridging algebraic and geometric techniques. Indeed, the link between the number of algebraic generators and geometric dimension parallels classical linear algebra. It shows how the cells of the tropical polytope $\tconv(T)$ can be used to study mechanisms.  For instance, in Section \ref{sec:re.details}  we derive a novel geometric characterization of revenue equivalence.

\begin{remark}
We offer a remark on Theorems \ref{thm:tropicaleigenspacegenerator} and \ref{thm:dimension}. One is stated in terms of $\mathcal{P}(g)$, the potentially lower dimensional payment set defined in (\ref{eqn:pg}), while the other is stated in terms of $\Plift(g)$, the embedding of the payment set into the tropical convex hull $\tconv(T)$, defined in (\ref{eqn:pg.tropical}). By Proposition \ref{prop:projection.payments}, the cells $\mathcal P(g)\subset \R^\img$ and $\Plift(g)\subset \R^m$ are isomorphic as tropical simplices, so that their dimension and the number of tropical vertices agrees. The calculation in (\ref{eqn:geometric.kleene.star}) can be carried out for all $i,j\in [m]$ on $\Plift(g)$, thereby lifting the tropical generating payments from $\R^\img$ to $\R^m$ to obtain a $V$-representation of $\Plift(g)$. In other words, the two theorems are fully compatible. In examples, for visualization purposes, it is often more convenient to work with $\Plift(g)$. In algebraic descriptions, as in Theorem \ref{thm:tropicaleigenspacegenerator}, it is more natural to work with $\mathcal{P}(g)$.\
\end{remark}

\begin{example}[Example \ref{ex:dimension} continued] \label{ex:dimension2} 
Consider now the mechanism $f_1$ from Example~\ref{ex:construction2} and the cell $\nu=\Plift(f_1)$ given by the line segment connecting $t_2$ and $u$. Inspecting Figure \ref{fig:hyperplane.arrangement}, we note  that the dimension graph of any point $r$ in the relative interior of $\nu$ has edges $\{1,3\}$, hence two connected components. The set of IC payments is generated by exactly two prices which can be seen to be the points $t_2$ and $u$. Thus, calculating the columns of $(L^{f_1})^*\in \R^{3\times 3}$ from $\Plift(f_1)$ by means of  (\ref{eqn:geometric.eigenspace}), returns two tropically linearly dependent columns, reflecting the fact that the cell is not of full dimension.
\end{example}

\subsection{Generating payments and revenue equivalence}\label{sec:re.details} 

There has been an effort to obtain a general characterization of the set of IC payments, see \cite{kos2013extremal,carbajal2013mechanism}. 
Kos and Messner \cite[Theorem 1]{kos2013extremal} showed that the supporting payments of an IC outcome function $g$ satisfy type-wise upper and lower bounds, which they termed  ``extremal payments". However, while extremal payments themselves are IC prices, boundedness by extremal payments is generally not sufficient for a payment rule to be IC. In contrast, generating payments completely characterize IC prices. Moreover, Theorems \ref{thm:tropicaleigenspacegenerator} and \ref{thm:dimension} show that they provide a minimal characterization and that their number is related to the geometry of the type space. 

Extremal payments are defined in terms of certain path-weights in the type graph of $g$, obtained by refining the results of \cite{rochet1987necessary, heydenreich2009characterization}, and extending \cite[Prop. 2]{chung2007non-differentiable}. Geometrically they correspond to two  extremal points in $\Plift(g)$. Indeed, the upper and lower bounds are the coordinate-wise maximal, respectively minimal points in $\Plift(g)\in\tconv(T)$. Examples \ref{ex:information.rents.2} and \ref{ex:information.rents} detail this construction showing how obtain an equivalent definition that is path weight free. It was argued in \cite[Sec. 7]{kos2013extremal} that such a definition is desirable. Moreover they show how our geometric tools naturally lend themselves to the economic analysis of equilibrium utilities and the distribution of information rents. 

If the upper and lower bounds on IC payments coincide, then an IC outcome function is said to be \emph{revenue equivalent} (RE). In other words, $g$ is RE, if its set of incentive compatible payments $\mathcal P(g)$ consists of exactly one point up to tropical scalar multiplication. 
A type space is revenue equivalent if any IC outcome function  defined on it is RE. The literature on revenue equivalence is vast \cite{myerson1981optimal, milgrom2002envelope, krishna2001convex, jehiel1996how, vohra2011mechanism, heydenreich2009characterization, chung2007non-differentiable}, see \cite{chung2007non-differentiable, vohra2011mechanism} for a comprehensive review. As a direct consequence of Theorem \ref{thm:dimension} and Proposition \ref{prop:projection.payments}, we obtain the following new characterization of revenue equivalence relating geometry and algebra.

\begin{corollary}\label{cor:re}
Let $g$ be an IC outcome function. Then $g$ is revenue equivalent, if and only if the dimension graph of $p\in \Plift(g)$ is  connected, which is the case if and only if the dimension of $\mathcal P(g)$ is zero. 
\end{corollary}

This characterization provides a unified framework for our study of both, the revenue equivalence of outcome functions, and that of type spaces. The key point is that it treats RE as a property of cells of $\tconv(T)$. A benefit of this view is that it provides an explicit characterization of RE rooted in the geometry of the type space. Another benefit is that the characterization extends to collections of cells, and thus to subsets of the set of all IC mechanisms. The following example demonstrates these points. 

\begin{figure}
    \centering
    \begin{subfigure}[t]{0.45\textwidth}
        \centering
        \begin{tikzpicture}[scale=0.75]
		
		\filldraw [black!10, fill opacity=0.7] (2,1.5) -- (8,1.5) -- (8,4.5) -- (2, 4.5);	
		
		\draw[black!20] (2,1.5) -- (8,1.5);
		\draw[black!20] (2,3) -- (8,3);
		\draw[black!20] (2,4.5) -- (8,4.5);

		\draw[black!20] (2,1.5) -- (2,4.5);
		\draw[black!20] (3,1.5) -- (3,4.5);
		\draw[black!20] (4,1.5) -- (4,4.5);
		\draw[black!20] (5,1.5) -- (5,4.5);
		\draw[black!20] (6,1.5) -- (6,4.5);
		\draw[black!20] (7,1.5) -- (7,4.5);
		\draw[black!20] (8,1.5) -- (8,4.5);
		
		\draw[black!20] (8,4.5) -- (5,1.5);
		\draw[black!20] (7,4.5) -- (4,1.5);
		\draw[black!20] (6,4.5) -- (3,1.5);
		\draw[black!20] (5,4.5) -- (2,1.5);
		\draw[black!20] (4,4.5) -- (2,2.5);
		\draw[black!20] (3,4.5) -- (2,3.5);
		
		\draw[black!20] (8,3) -- (6.5,1.5);
		\draw[black!20] (7,3) -- (5.5,1.5);
		\draw[black!20] (6,3) -- (4.5,1.5);
		\draw[black!20] (5,3) -- (3.5,1.5);
		\draw[black!20] (4,3) -- (2.5,1.5);
		\draw[black!20] (3,3) -- (2,2);
		
		\filldraw (2,1.5) circle (1.5pt);
		\filldraw (3,1.5) circle (1.5pt);
		\filldraw (4,1.5) circle (1.5pt);
		\filldraw (5,1.5) circle (1.5pt);
		\filldraw (6,1.5) circle (1.5pt);
		\filldraw (7,1.5) circle (1.5pt);
		\filldraw (8,1.5) circle (1.5pt);

		\filldraw (2,3) circle (1.5pt);
		\filldraw (3,3) circle (1.5pt);
		\filldraw (4,3) circle (1.5pt);
		\filldraw (5,3) circle (1.5pt);
		\filldraw (6,3) circle (1.5pt);
		\filldraw (7,3) circle (1.5pt);
		\filldraw (8,3) circle (1.5pt);

		\filldraw (2,4.5) circle (1.5pt);
		\filldraw (3,4.5) circle (1.5pt);
		\filldraw (4,4.5) circle (1.5pt);
		\filldraw (5,4.5) circle (1.5pt);
		\filldraw (6,4.5) circle (1.5pt);
		\filldraw (7,4.5) circle (1.5pt);
		\filldraw (8,4.5) circle (1.5pt);
						
		\draw[line width=0.1mm] (0,0) rectangle (10,6);
\end{tikzpicture}
        \caption{Discrete approximation of type space with tropical polytope and cells.}
    \end{subfigure}%
    \hspace{0.5cm} 
    \begin{subfigure}[t]{0.45\textwidth}
        \centering
        \begin{tikzpicture}[scale=0.75]
		
		\filldraw [black!10, fill opacity=0.7] (2,1.5) -- (8,1.5) -- (8,4.5) -- (2, 4.5);	
		\draw[black] (2,1.5) -- (8,1.5);
		\draw[black] (2,3) -- (8,3);
		\draw[black] (2,4.5) -- (8,4.5);
		
		\fill[pattern=vertical lines, pattern color=black!30] (6.5,3.05)-- (8,3.05)--(8,4.5)--(6.5,3.05);
	
		\fill[pattern=vertical lines, pattern color=black!30] (6.5,1.55)-- (8,1.55)--(8,3)--(6.5,1.5);
		
		\filldraw (7.5,3.75) circle (1pt);
		\draw [thin, dotted](7.5,0) -- (7.5,3.75) -- (9.75,6);
		\draw [thin, dotted](0,3.75) -- (7.5,3.75); 

		\filldraw (7.5,3.25) circle (1pt);
		\draw [thin, dotted](0,3.25) -- (7.5,3.25) -- (10,5.75);
		\draw [thin, dotted](4.6,3.2) -- (4.8,3.4) ;

	\node [right] at (8,3) {\tiny RE fails}; 
	\draw
    [->,thin, dashed](8.75,2.75) to [out=270, in=0] (8.1,2.5);
    \draw
    [->,thin, dashed](8.75,3.25) to [out=90, in=0] (8.1,3.5);
    
    \node at (2,1) {\tiny RE holds}; 
	\draw
    [->,thin, dashed](3,1) to [out=0, in=270] (3.5,2);

	\draw[thin, dashed](6.5,1.55)-- (8,1.55)--(8,3)--(6.5,1.5) ;
	\draw[thin, dashed](6.5,3.05)-- (8,3.05)--(8,4.5)--(6.5,3.05) ;

		\draw[line width=0.1mm] (0,0) rectangle (10,6);
		
\end{tikzpicture}
        \caption{Type space with tropical polytope.}
    \end{subfigure}
\caption{A type space on which RE and non-RE outcome functions coexist.}\label{fig:re.2}
\end{figure}
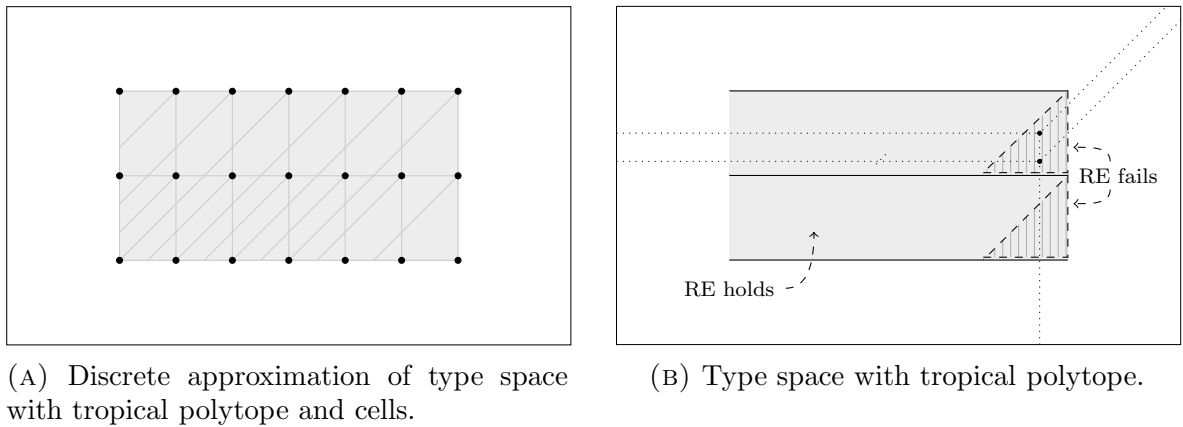

\begin{example}[RE via Corollary \ref{cor:re}] \label{ex:re}
Consider Figure \ref{fig:re.2} where the type space is given by three parallel copies of an interval, depicted using black lines in panel (\textsc b). The black dots in panel (\textsc a) are a discrete approximation, for which we also indicated the cell structure of the tropical polytope. The tropical polytopes are shaded gray in both panels. A limiting argument using finite approximations shows that all cells converge to points in the region which is not shaded in the line pattern, see Example \ref{ex:limit}. For these cells RE holds by Corollary \ref{cor:re}. In the two triangular regions shared with the line pattern, RE fails. There cells converge to vertical line segments. Using Definition \ref{defn:dimension} one verifies that the dimension graph has edges between the nodes labeled 1 and 2, while the node with label 3 is isolated. Hence the payment simplex has dimension two. Indeed,  in panel (\textsc b) we depicted two prices that support the same outcome function yet differ modulo addition of constants.  
\end{example}

Two corollaries of Theorem \ref{thm:dimension}, related to Corollary \ref{cor:re} should not go unmentioned, as they connect our results to the existing literature. 

\begin{corollary}\label{cor:re.function} 
Let $g:T\to [m]$ be IC and let $L^g$ be its allocation matrix. For $p$ in the relative interior of $\Plift(g)$ the following are equivalent.
	 \begin{enumerate}
	 \item The dimension graph of $p$ is connected.
	 \item The outcome function $g$ is revenue equivalent.
	 \item The matrix $(L^g)^\ast$ is skew-symmetric.
	 \end{enumerate}
\end{corollary}

\begin{corollary}\label{cor:re.type} 
Let $T \subset \R^{m}$ be a type space. The following are equivalent. \begin{enumerate}
 \item 	For each $p~\in~\tconv(T)$ the graph of $p$ is connected.
 \item The type space $T$ is revenue equivalent.
 \end{enumerate}
\end{corollary}

The equivalence of parts (2) and (3) in Corollary \ref{cor:re.function} is the result obtained by Heydenreich, M\"uller, Uetz and Vohra \cite[Theorem 1]{heydenreich2009characterization}, for finite outcome functions. 
Since the Kleene star $( L^g)^*$ records the length of shortest paths, its skew symmetry is equivalent to the anti-symmetry of the graph distance defined in \cite{heydenreich2009characterization}.  We augment this characterization by the geometric condition (1), which can be verified on the min-plus arrangement $\minH(-T)$. It makes the study of the RE property explicit and does not necessitate passing to induced allocation matrices. 

Corollary~\ref{cor:re.type} gives an alternative characterization for RE type spaces. A recent result in this vein is due to Chung and Olszewski \cite[Theorem 4]{chung2007non-differentiable}, which states that $T$ is RE if and only if there do not exist disjoint subsets $B_1, B_2$, a function $r: B_1 \cup B_2\to \R$, and an $\epsilon >0$, such that $T$ equals the union of two non-empty sets $V_+(B_1,\epsilon,r)$ and $V_-(B_2,\epsilon,r)$, whose definitions depend on the parameters given. Given a specific type space $T$, it is not immediate how to apply this theorem to check whether $T$ is RE, while checking whether a type space is RE using Corollary~\ref{cor:re.type} via the geometric algorithm described following the definition of the dimension graph is surprisingly simple. Example \ref{ex:re} demonstrates this.

As pointed out in \cite{heydenreich2009characterization}, results which establish RE of a type space have limited applicability, as there are type spaces on which RE and non-RE outcome functions coexist. Corollary \ref{cor:re} is immune to this scrutiny. The emphasis on cells allows for the analysis of type spaces in light of geometry. It combines the analytic detail of a characterization of the RE property on the level of outcome functions from Heydenreich, M\"uller, Uetz and Vohra, with the appealing features of a geometric criterion emphasized in the work of Chung and Olszewski. For instance, Example \ref{ex:re} showed how Corollary \ref{cor:re} can be used to study on which parts of the type space RE holds and where it fails, thereby characterizing the subsets of IC mechanisms which are RE and non-RE.

We conclude this section with two examples aiming to build economic intuition for our geometric theory. The first provides a geometric construction of the extremal payments of \cite{kos2013extremal}, the second shows that our theory can be used to understand how the geometry of cells interplays with the ability of the mechanism designer to distribute information rents in the absence of RE.

\begin{example}[Equilibrium utility and extremal payments]\label{ex:information.rents.2} 
Consider the outcome function $g$ from Example \ref{ex:construction1}. Following Remark \ref{rem:affine.space} we normalize with respect to outcome 1, thus types and prices are depicted as cosets $t=[(0, t_2-t_1, t_3-t_1)]$, $p=[(0, p_2-p_1, p_3-p_1)]$. If we choose a payment coset $p=[(0, p_2-p_1, p_3-p_1)] \in \Plift(g)$ and $\tilde p_1\in \R$, we can fix the representative $\tilde p\in p$ with first coordinate $\tilde p_1$ to be the mechanism's payment function. Then the equilibrium utility function is given by $$u([g(t),p(t)],t)=(t-\tilde p)_{g(t)}= (t_j-t_1)-(\tilde p_j-\tilde p_1)+t_1-\tilde p_1 \quad \mbox{for}~ t\in g^{-1}(j).$$ Geometrically this is the difference of the projections of $t-p$ onto the $j$'th axis, normalized by $t_1-\tilde p_1$. The extremal payments of \cite{kos2013extremal} for $g$ with ``anchor type" $\tilde t\in g^{-1}(1)$ are calculated geometrically as follows. For each type in $g^{-1}(j)$, the lower bound is the maximal $j$'th axis-aligned difference to the vectors in the payment simplex $\Plift(g)$, with $\tilde p_1 = \tilde t_1$. Similarly the upper bound is calculated as the axis-aligned minimal difference to vectors in the simplex. It is easy to see that the lower and upper bound are the representatives of the vectors in the faces of $\Plift(g)$ supported by $[(0,-1, -1)]$ and $[(0,1,1)]$, respectively, with $\tilde p_1=\tilde t_1$ as first coordinate. These payments bound any IC payment rule.\end{example}

\begin{example}[Equilibrium utility for non-RE mechanisms]\label{ex:information.rents} 

The purpose of this example is to illustrate how the geometry of cells determines the freedom of mechanisms to choose equilibrium utilities in the absence of RE. Note that the geometry of payment simplices in $\tconv(T)$ is very restricted, only cells with facet normals $e_i-e_j$, for $i\neq j$ and $\pm e_i$ in $\R^{m-1}$ can arise, \cite{joswig2010tropical, tran2017enumerating}. Via (\ref{eqn:pz.min}) each point in a payment simplex corresponds to some feasible utility function under the outcome function. In economics terms, the subset of directions occurring in the relative interior of a cell determine how the mechanism can trade off equilibrium utilities between types. 
Consider for instance the outcome function $g$ from Example \ref{ex:construction1}. Any payment $p=[(0, p_2-p_1, p_3-p_1)]$ in the relative interior of $\Plift(g)$ can be perturbed in the direction of $[(0,1,0)]$ and $[(0,0,1)]$ within the cell so that the mechanism can perturb the utility of types in $g^{-1}(2)$ and $g^{-1}(3)$ independently of each other. Moreover the normalizing coordinate $p_1$ can be  set freely without changing the coset of $p$, so that also the utility of types in $g^{-1}(1)$ can be chosen. 
In contrast lower-dimensional payment sets limit the ability of the mechanism to distribute utility. Consider Figure \ref{fig:re.2} (\textsc b). As noted in Example~\ref{ex:re}, the cells in the two dashed triangular regions are lines in the $[(0,0,1)]$ direction. We depicted two max-plus hyperplanes (dotted) that can be used to define the same IC surjective outcome function. The mechanism designer can extract utility of the types receiving outcome 3 by perturbing prices in the $[(0,0,1)]$ direction within this cell, but cannot extract utility from types receiving outcome 2 without changing the outcome function. 

While the geometry of payment simplices is determined by the geometry of the type space as a whole, each simplex is completely specified by its generating payments. Hence, the generating payments are sufficient to study all feasible equilibrium utility functions under a given outcome function. 
For recent works on mechanism design without revenue equivalence we refer to \cite{kos2013extremal, carbajal2013mechanism}.  We also point to the works of \cite{myerson1981optimal, rahman2014surplus, chen2013genericity, heifetz2006generic, mcafee1992correlated} for detailed studies of rent and surplus extraction using other techniques. \end{example}

\section{Characterization of Feasible Allocation Matrices}\label{sec:allocation.matrices}
The allocation matrix defined in (\ref{eqn:allocation.matrix}) plays a prominent role in the verification of IC and RE. It encodes important properties of an outcome function, such as weak monotonicity, IC, supporting payments and RE. Yet characterizations based on it are implicit by nature of its definition. To remedy this, two essential hurdles need to be overcome. Firstly, given a type set $T$ and matrix a $L$, we need a criterion to decide whether $L$ is induced by some outcome function $g:T\to [m]$. In other words, we must characterize the image $g\mapsto L^g$. We shall call such matrices \emph{realizable}. Secondly, if $L$ is realizable, a method to construct the functions that realize it is desirable. Theorem \ref{thm:realizable} resolves these two questions. It thus moves the focus from studying outcome functions to analyzing specific matrices.

For ease of exposition, we shall assume that $g: T \to [m]$ is an onto outcome function. This is without loss, as any outcome function is onto upon projecting the type space. We note however that the theorem can be adapted to the general case at the expense of additional notation.

\subsection{Realizability of allocation matrices}\label{sec:realizability}

Let $L$ be a $m\times m$ matrix with zero diagonal and write $L_j$ for its $j$'th column. To economize on notation we shall write $\maxL_j := \maxH_j(L_j)$, $\minL_j:=\minH_j(L_j)$ and $\maxL_{jk} := \maxH_{jk}(L_j)$.
For sets $A, B \subset \R^m$, write $\mathsf{d}(A,B)$ for the infimum of the euclidean distance between pairs of points in these sets. 

\begin{definition}\label{defn:seperating}
For $j,k \in [m], j \neq k$, define $\mathcal{I}_{jk} = \mathcal{\overline{L}}_{jk}\cap T$. A \emph{$(j,k)$-witness} is a sequence  
$\{s^{j,r}: r \geq 1\} \subseteq T \cap \maxL_j^\circ$ such that
$$ \lim_{r \to \infty} \mathsf{d}(s^{j,r}, \mathcal{\overline{L}}_{jk} )=0.$$
We say that \emph{$L$ separates $T$ at $(j,k)$} if
\begin{equation}\label{eqn:separate.t}
\mathsf{d}(T \cap \mathcal{\overline{L}}_j, \mathcal{\overline{L}}_{jk})=0,
\end{equation}
and in addition, whenever $\mathcal{I}_{jk} = \mathcal{I}_{kj} = \{s\}$ for some $s \in \mathbb{TA}^{m-1}$, then there exists a $(j,k)$-witness or a $(k,j)$-witness. 
Say that \emph{$L$ separates $T$} if $L$ separates $T$ for all $j,k \in [m], j \neq k$. 
\end{definition}

Given a matrix $L \in \R^{m \times m}$, we will say that \emph{$L$ is realized by $g$} if $L$ is the allocation matrix $L^g$ for some outcome function $g: T \to [m]$. We shall call $L$ \emph{realizable} if it is realized by some~$g$. 

\begin{theorem}\label{thm:realizable}
Fix a type space $T \subset \R^m$. Let $L \in \R^{m \times m}$ be a matrix with zero diagonal. Then $L$ is realizable if and only if $L$ separates $T$, and 
$$T \subseteq \bigcup_{k=1}^m \overline{\mathcal{L}}_k.$$
\end{theorem}

Theorem \ref{thm:realizable} demonstrates how network flow problems can be studied geometrically. As discussed in the introduction, one issue with the characterization of IC based on cycle lengths is the lack of an inverse construction. The proof of Theorem \ref{thm:realizable} is constructive: it yields an outcome function $g$ such that $L = L^g$.

\begin{corollary}\label{cor:L.payment}
Let $L = L^g \in \R^{m \times m}$ be the allocation matrix of some (not necessarily IC) outcome function $g: T \to [m]$. Then \begin{equation}\label{eqn:geometric.eigenspace}\mathcal P(g) = \bigcap_{k\in 1}^m\minL_k.\end{equation} In particular, $g$ is IC if and only if this intersection is non-empty.
\end{corollary}

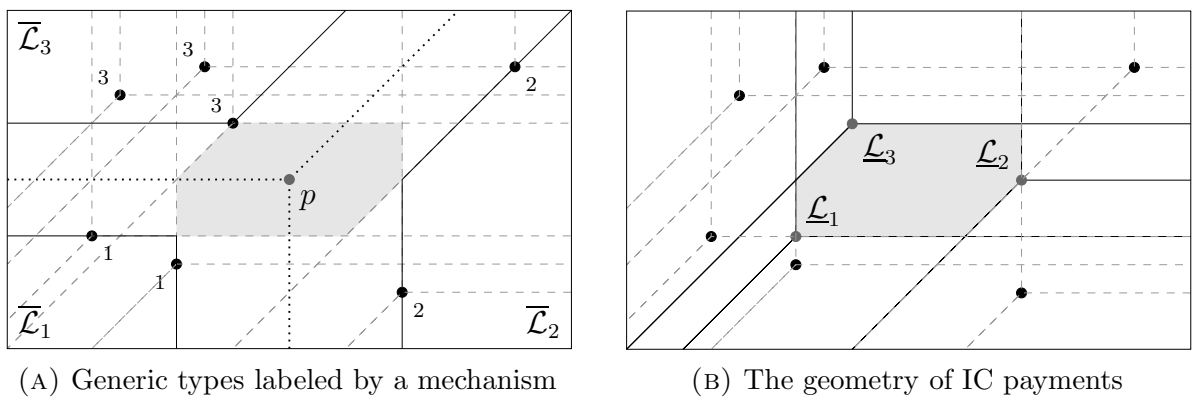
\begin{figure}[h!]
    \centering
\begin{subfigure}[t]{0.45\textwidth}
\centering
\begin{tikzpicture}[scale=0.75]
	
	\filldraw [black!10](3,2) -- (6,2) -- (7,3) -- (7,4) -- (4,4) -- (3,3); 
	
	\filldraw (4,4) circle (2.5pt);
	\node[above left] at (4,4) {\tiny $3$};
	\draw [line width=0.1mm, black!40, dashed](4,6) -- (4,4) -- (0,0) -- (4,4) -- (10,4);
	
	\filldraw (3.5,5) circle (2.5pt);
	\node[above left] at (3.5,5) {\tiny $3$};
	\draw [line width=0.1mm, black!40, dashed](3.5,6) -- (3.5,5) -- (0,1.5) -- (3.5,5) -- (10,5);
	
	\filldraw (2,4.5) circle (2.5pt);
	\node[above left] at (2,4.5) {\tiny $3$};
	\draw [line width=0.1mm, black!40, dashed](2,6) -- (2,4.5) -- (0,2.5) -- (2,4.5) -- (10,4.5);
	
	\filldraw (3,1.5) circle (2.5pt);
	\node[below left] at (3,1.5) {\tiny $1$};
	\draw [line width=0.1mm, black!40, dashed](3,6) -- (3,1.5) -- (1.5,0) -- (3,1.5) -- (10,1.5);
	
	\filldraw (1.5,2) circle (2.5pt);
	\node[below right] at (1.5,2) {\tiny $1$};
	\draw [line width=0.1mm, black!40, dashed](1.5,6) -- (1.5,2) -- (0,0.5) -- (1.5,2) -- (10,2);
	
	\filldraw (7,1) circle (2.5pt);
	\node[below right] at (7,1) {\tiny $2$};
	\draw [line width=0.1mm, black!40, dashed](7,6) -- (7,1) -- (6,0) -- (7,1) -- (10,1);
	
	\filldraw (9,5) circle (2.5pt);
	\node[below right] at (9,5) {\tiny $2$};
	\draw [line width=0.1mm, black!40, dashed](9,6) -- (9,5) -- (4,0) -- (9,5);
	
	\draw [line width=0.25mm, dotted](5,0) -- (5,3) -- (8,6);
	\draw [line width=0.25mm, dotted](0,3) -- (5,3); 
	\filldraw[black!60] (5,3) circle (2.5pt);
	\node[below right] at (5,3) {$p$};

	\draw [line width=0.1mm, black](0,4) -- (4,4) -- (6,6);
	\draw [line width=0.1mm, black](0,2) -- (3,2) -- (3,0);
	\draw [line width=0.1mm, black](7,0) -- (7,3) -- (10,6);
	
	\node[above right] at (0,0) {$\maxL_1$};
	\node[above left] at (10,0) {$\maxL_2$};
	\node[below right] at (0,6) {$\maxL_3$};
	
	\draw[line width=0.1mm] (0,0) rectangle (10,6);
\end{tikzpicture}
\caption{Generic types labeled by a mechanism}\label{fig:separate}
\end{subfigure}
    \hspace{0.5cm}
\begin{subfigure}[t]{0.45\textwidth}
\centering
\begin{tikzpicture}[scale=0.75]
	
	\filldraw [black!10](3,2) -- (6,2) -- (7,3) -- (7,4) -- (4,4) -- (3,3); 
	
	\node[above right] at (3,2) {$\minL_1$};
	\draw [line width=0.1mm](3,6) -- (3,2) -- (1,0) -- (3,2) -- (10,2);
	\filldraw [black!60] (3,2) circle (2.5pt);	
	
	\node[above left] at (7,3) {$\minL_2$};
	\draw [line width=0.1mm](7,6) -- (7,3) -- (4,0) -- (7,3) -- (10,3);
	\filldraw [black!60] (7,3) circle (2.5pt);
	
	\node[below right] at (4,4) {$\minL_3$};
	\draw [line width=0.1mm](4,6) -- (4,4) -- (0,0) -- (4,4) -- (10,4);
	\filldraw [black!60] (4,4) circle (2.5pt);

	\filldraw (3.5,5) circle (2.5pt);
	\draw [line width=0.1mm, black!40, dashed](3.5,6) -- (3.5,5) -- (0,1.5) -- (3.5,5) -- (10,5);
	
	\filldraw (2,4.5) circle (2.5pt);
	\draw [line width=0.1mm, black!40, dashed](2,6) -- (2,4.5) -- (0,2.5) -- (2,4.5) -- (10,4.5);
	
	\filldraw (3,1.5) circle (2.5pt);
	\draw [line width=0.1mm, black!40, dashed](3,6) -- (3,1.5) -- (1.5,0) -- (3,1.5) -- (10,1.5);
	
	\filldraw (1.5,2) circle (2.5pt);
	\draw [line width=0.1mm, black!40, dashed](1.5,6) -- (1.5,2) -- (0,0.5) -- (1.5,2) -- (10,2);
	
	\filldraw (7,1) circle (2.5pt);
	\draw [line width=0.1mm, black!40, dashed](7,6) -- (7,1) -- (6,0) -- (7,1) -- (10,1);
	
	\filldraw (9,5) circle (2.5pt);
	\draw [line width=0.1mm, black!40, dashed](9,6) -- (9,5) -- (4,0) -- (9,5);

	\draw[line width=0.1mm] (0,0) rectangle (10,6);
\end{tikzpicture}
        \caption{The geometry of IC payments}\label{fig:separate}
    \end{subfigure}
\caption{Geometric construction of allocation matrices. The type set consists of the black dots that are apices of the min-plus hyperplanes drawn using dashed lines. In Panel \textsc{(a)}, types have been labeled according to their outcome under an outcome function $g$. In Panel \textsc{(b)}, the gray dots are the rows of the matrix $L^g$, viewed as points in $\TA^2$. The set of IC payments of $g$ is shaded gray in both panels. Figure accompanies Example \ref{ex:separate}.}\label{fig:separate}
\end{figure}

\begin{example}[Geometric construction of allocation matrices] \label{ex:separate}
	Figure \ref{fig:separate} \textsc{(a)} shows how to construct the apices of the sectors $\maxL_i$ geometrically. The set $T$ consists of the generic black points. The dashed lines are hyperplanes of $\minH(-T)$. The labels next to the points define an outcome function $g: T \to [3]$. For $i \in [3]$, the heavy black lines define the boundary of the max-plus sectors $\overline{\mathcal{L}}_i$, whose apex is the $i$-th row of $L^g$. For example, for $i=1$, the first row of $L^g$ can be written as $ L^g_1 = (0, ~ \sup_{t\in g^{-1}(1)}\{t_2-t_1 \},~ \sup_{t\in g^{-1}(1)}\{t_3-t_1 \})$. The set $\mathcal{P}(g)$ equals the cell shaded gray, computed using Proposition  \ref{prop:projection.payments} and Definition \ref{def:p.lift}. Panel\textsc{(b)} verifies that this cell equals $\bigcap_{i=1}^3\minL_i$, as stipulated by Corollary \ref{cor:L.payment}. 
\end{example}

	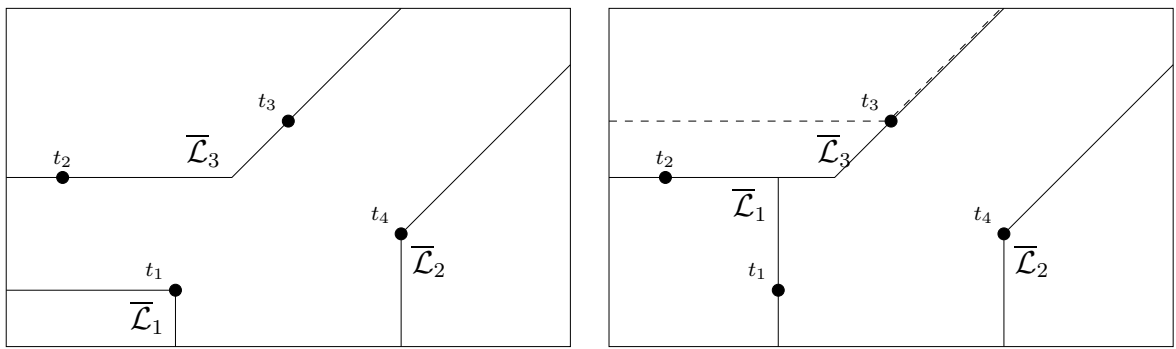
\begin{figure}[h!]
    \centering
    \begin{subfigure}[t]{0.45\textwidth}
        \centering
        \begin{tikzpicture}[scale=0.75]
\draw [line width=0.1mm](7,0) -- (7,2) -- (10,5);
\draw [line width=0.1mm](0,3) -- (4,3) -- (7,6);
\draw [line width=0.1mm](3,0) -- (3,1) -- (0,1);

\filldraw (7,2) circle (3pt);
\node[above left] at (7,2) {\tiny $t_4$};
\filldraw (5,4) circle (3pt);
\node[above left] at (5,4) {\tiny $t_3$};
\filldraw (3,1) circle (3pt);
\node[above left] at (3,1) {\tiny $t_1$};
\filldraw (1,3) circle (3pt);
\node[above] at (1,3) {\tiny $t_2$};

\node[above left] at (4,3) {$\mathcal{\overline{L}}_3$};
\node[below left] at (3,1) {$\mathcal{\overline{L}}_1$};
\node[below right] at (7,2) {$\mathcal{\overline{L}}_2$};
\draw[line width=0.1mm] (0,0) rectangle (10,6);
\end{tikzpicture}
        \caption{Realizable with outcome function $g$ where $g(t_1) = 1$, $g(t_2)=g(t_3) = 3$, $g(t_4) = 2$.}
    \end{subfigure}%
\hspace{1em}
    \begin{subfigure}[t]{0.45\textwidth}
        \centering
        \begin{tikzpicture}[scale=0.75]
\draw [line width=0.1mm](7,0) -- (7,2) -- (10,5);
\draw [line width=0.1mm](0,3) -- (4,3) -- (7,6);
\draw [line width=0.1mm, dashed](0,4) -- (4.95,4) -- (6.95,6);
\draw [line width=0.1mm](3,0) -- (3,3);

\filldraw (7,2) circle (3pt);
\node[above left] at (7,2) {\tiny $t_4$};
\filldraw (5,4) circle (3pt);
\node[above left] at (5,4) {\tiny $t_3$};
\filldraw (3,1) circle (3pt);
\node[above left] at (3,1) {\tiny $t_1$};
\filldraw (1,3) circle (3pt);
\node[above] at (1,3) {\tiny $t_2$};

\node[above] at (4,3) {$\mathcal{\overline{L}}_3$};
\node[below left] at (3,3) {$\mathcal{\overline{L}}_1$};
\node[below right] at (7,2) {$\mathcal{\overline{L}}_2$};
\draw[line width=0.1mm] (0,0) rectangle (10,6);
\end{tikzpicture}
        \caption{Not realizable: there is no $(1,3)$- nor $(3,1)$-witness, so $L$ does not separate $T$.}
    \end{subfigure}%
\caption{Realizable and non-realizable matrices, defined by the apices of $\overline{\mathcal{L}}_i$ for $i \in [3]$.
Figure accompanies Example \ref{ex:non.realizable}.}\label{fig:non.realizable}
\end{figure}

\begin{example}[Realizable and non-realizable matrices]\label{ex:non.realizable}
In Figure \ref{fig:non.realizable}, the set of four black points is the type space $T = \{t_1,t_2,t_3,t_4\}$. Each panel defines a matrix $L$ whose rows are the apices of the sectors $\overline{\mathcal{L}}_1, \overline{\mathcal{L}}_2$ and $\overline{\mathcal{L}}_3$ depicted using black lines. The matrix in Panel (\textsc{a}) is realizable: it equals $L^g$ for the mechanism $g$ defined by $g(t_2) = g(t_3) = 3$, $g(t_1) = 1$ and $g(t_4) = 2$.  The matrix $L$ in Panel (\textsc{b}) is not realizable. Suppose for a contradiction that it were realized by some outcome function~$g$. By Theorem~\ref{thm:realizable}, we would require $g(t_4) = 2$, $g(t_1) = 1$, $g(t_3) = 3$, and $g(t_2)$ taking value either 1 or 3. If $g(t_2) = 3$, then $L^g$ must equal the matrix shown in Figure \ref{fig:non.realizable}~(\textsc{a}) which is different from $L$. So $g(t_2) = 1$, but for this $g$, $\maxH(-L_3^g)$ equals the sector depicted with the dashed lines, and hence $L^g \neq L$. Thus $L$ is not realizable. Realizability fails since $L$ does not separate $T$ at $\{1,3\}$. Here, one has
$\mathcal{I}_{13} = \maxL_{13} \cap \maxL_{31} \cap T = \{t_2\}, $
but there is no $(1,3)$- nor a $(3,1)$-witness.
\end{example}

\subsection{Weak Monotonicity}\label{sec:weak.monotone}
Recall that an outcome function $g$ is said to be \emph{weakly monotone} or two-cycle monotone if the matrix $L^g + (L^g)^\top$ is element-wise nonnegative. Using tropical geometry, this algebraic condition on $L^g$ is translated into the following geometric criterion on the row vectors of $L^g$ and the min-plus arrangement $\minH(-T)$.

\begin{proposition}\label{prop:weak.mon}
Fix a type space $T \subset \R^m$. Let $L = L^g \in \R^{m \times m}$ be the allocation matrix of some (not necessarily IC) outcome function $g: T \to [m]$. Then $g$ is weakly monotone if and only if the open sets $\overline{\mathcal{L}}^\circ_1, \ldots, \overline{\mathcal{L}}^\circ_m$ are pairwise disjoint.  
\end{proposition}

It is known that weak monotonicity is generally not sufficient for IC, inspiring a still evolving literature identifying conditions on the type space when it is, see \cite{saks2005weak,bikhchandani2006weak,ashlagi2010monotonicity,archer2008characterizing,jehiel1996how,kushnir}. Together, Corollary \ref{cor:L.payment} and Proposition \ref{prop:weak.mon} indicate geometric constraints on the type space for weak monotonicity to imply cyclical monotonicity. More generally, any weakening of cyclical monotonicity requires finding conditions under which a non-empty intersection of subsets of the sectors $\minL_i$ already implies non-emptiness of the intersection of all sectors $\minL_i$.

	\begin{figure}[h]
    \centering
    \begin{subfigure}[t]{0.33\textwidth}
        \centering
        \begin{tikzpicture}[scale=0.5]

\filldraw[black!10] (0,5.05)  to [out=10,in=215] (9.94,10)--(0,10);
\filldraw[black!10] (0,4.95)  to [out=0,in=90] (4.95,0)--(0,0);
\filldraw[black!10] (5.05,0)  to [out=90,in=235] (10,9.94)--(10,0);

\draw [line width=0.25mm, dotted](0,6) -- (6,6) -- (10,10);
\draw [line width=0.25mm, dotted](4,4) -- (6,6);
\draw [line width=0.25mm, dotted](6,6);
\draw [line width=0.25mm, dotted](4,0) -- (4,6);
\filldraw (4,6) circle (2.5pt);
\filldraw (6,6) circle (2.5pt);
\filldraw (4,4) circle (2.5pt);
\node[above left] at (6,6) {$\mathcal{\overline{L}}_3$};
\node[below right] at (4,4) {$\mathcal{\overline{L}}_2$};
\node[below left] at (4,6) {$\mathcal{\overline{L}}_1$};
\draw[line width=0.1mm] (0,0) rectangle (10,10);

\end{tikzpicture}
\caption{A mechanism that is weakly monotone but not IC.}
    \end{subfigure}%
    ~ 
    \begin{subfigure}[t]{0.33\textwidth}
        \centering
        \begin{tikzpicture}[scale=0.5]

\filldraw[black!10] (0,5.05)  to [out=10,in=215] (9.94,10)--(0,10);
\filldraw[black!10] (0,4.95)  to [out=0,in=90] (4.95,0)--(0,0);
\filldraw[black!10] (5.05,0)  to [out=90,in=235] (10,9.94)--(10,0);

\draw [line width=0.1mm](10,6) -- (4,6) -- (4,10);
\draw [line width=0.1mm](0,0) -- (4,4);
\draw [line width=0.1mm](4,4) -- (4,6);
\draw [line width=0.1mm,](4,4) -- (6,6);
\filldraw (4,6) circle (2.5pt);
\filldraw (6,6) circle (2.5pt);
\filldraw (4,4) circle (2.5pt);
\node[below right] at (6,6) {$\mathcal{\underline{L}}_3$};
\node[above left] at (4,4) {$\mathcal{\underline{L}}_2$};
\node[above right] at (4,6) {$\mathcal{\underline{L}}_1$};
\draw[line width=0.1mm] (0,0) rectangle (10,10);
\end{tikzpicture}
        \caption{The min-plus sectors verifying non-IC.}
    \end{subfigure}~
    \begin{subfigure}[t]{0.33\textwidth}
        \centering
        \begin{tikzpicture}[scale=0.5]
\filldraw[black!10] (0,5.05)  to [out=10,in=215] (9.94,10)--(0,10);
\filldraw[black!10] (0,4.95)  to [out=0,in=90] (4.95,0)--(0,0);
\filldraw[black!10] (5.05,0)  to [out=90,in=235] (10,9.94)--(10,0);
\draw [line width=0.25mm, dotted](10,10) -- (5,5) -- (0,5);
\draw [line width=0.28mm, dotted](5,5) -- (5,0);
\filldraw (5,5) circle (2.5pt);
\node[above] at (5,5) {$\mathcal{\underline{L}}_3$};
\node[below right] at (5,5) {$\mathcal{\underline{L}}_2$};
\node[below left] at (5,5) {$\mathcal{\overline{L}}_1$};
\draw[line width=0.1mm] (0,0) rectangle (10,10);
\end{tikzpicture}
        \caption{On the same type space, another mechanism that is RE.}
    \end{subfigure}
\caption{Demonstration of weak monotonicity, lack of incentive compatibility, on a revenue equivalent domain. Figure accompanies Example \ref{ex:non.convex}.}\label{fig:revenue-equivalent.non-convex}
\end{figure}
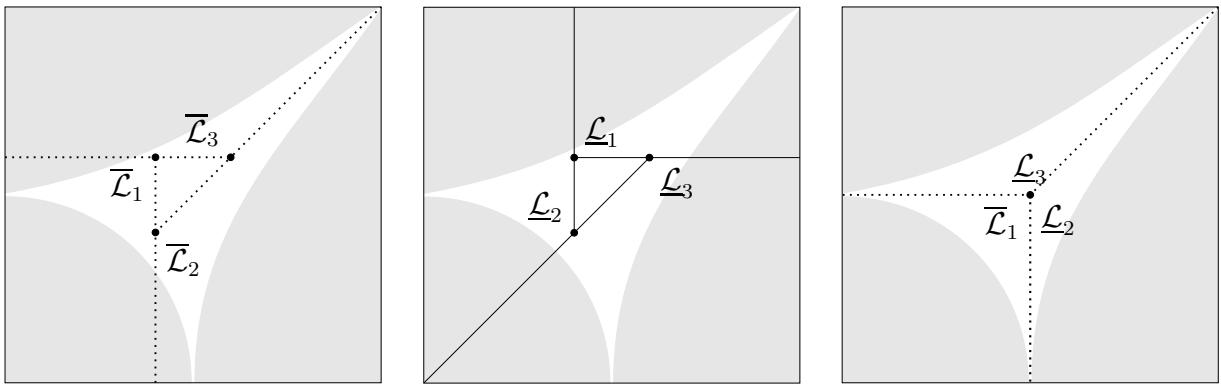

	\begin{example}[Weak monotonicity, IC and an RE type space] \label{ex:non.convex}
In Figure \ref{fig:revenue-equivalent.non-convex}, the set $T$ consists of all points in the gray shaded region. On this type space not all weakly monotone outcome functions on $T$ are IC. On convex type spaces this cannot happen \cf \cite{saks2005weak}. Panel~(\textsc{a}) defines a weakly monotone outcome function via $L^g$.
Weak monotonicity follows from Proposition~\ref{prop:weak.mon}, since the open sectors $\maxL_1^\circ, \maxL_2^\circ, \maxL_3^\circ$ are pairwise disjoint. 
Panel (\textsc{b}) verifies that $\Eig_0(L^g) = \emptyset$ via Corollary \ref{cor:L.payment}, so $g$ is not IC. Using the dimension graph one confirms that $T$ is RE using Corollary \ref{cor:re.type}. Note that if this type space were unbounded, then its closure would neither be path-connected, nor `boundedly grid-wise connected', which are other known and easily verified sufficient conditions for a type space to be RE, \cf \cite[Theorems 1 and 4]{chung2007non-differentiable}. See Section \ref{sec:general.types} on how to extend our theorems to such type spaces. 
\end{example}

\section{Extensions}\label{sec:extensions}

\subsection{Multiplayer mechanisms}\label{sec:multiplayer}

We give the generalization of Theorem \ref{thm:main.generic} to multiplayer, dominant strategy incentive compatible mechanisms.
In this case, the matrix that represents the outcome function $g$ and the covector of a cell of $\minH(-T)$ are generalized to tensors. Let us elaborate. Suppose that there are $n\in \mathbb{N}$ players, each player $i\in [n]$  has type space $T_i\subset \R^{m}$. A multiplayer mechanism is a pair $(g,p)$, with outcome function $g: T_1 \times \dots \times T_n \to [m]$ and payment $p: T_1 \times \dots \times T_n \to \R^{m\times n}$. Say that $(g,p)$ is dominant strategy incentive compatible (D-IC) if for each player $i \in [n]$, for each possible announcement $t_{-i} \in \prod_{i\neq j} T_j$ of the other players, the induced mechanism for player $i$ is IC. View $g$ as a tensor of dimension $m\times T_1\times \cdots\times T_n$ having entries in $\{0,1\}$, where $g_{k,t_1, \ldots t_n}=1$ if and only if $g(t_1, \ldots, t_n)=k$. Define a \emph{multiple covector} $\nu$ to be a tensor of dimension $m\times T_1\times \cdots \times T_n$ with entries in $\{0,1\}$ such that for each $\ti\in \Ti$, the axis-aligned sub-matrix $\nu_{\bullet, \bullet, \ti}$ of dimension $m\times T_i$ is a covector of $\minH(-T_i)$. The following is immediate from the definition of D-IC for multiplayer outcomes.

\begin{proposition}\label{prop:multiplayer}
The outcome function $g$ is incentive compatible if and only if for each $i \in [n]$, every axis-aligned submatrix of dimension $m\times T_i$ is contained in the covector of some cell in $\tconv(T_i)$. 
\end{proposition}

\begin{example}[Multiplayer Case]\label{ex:multiplayer} Let $n=2$, $T_1=T_2$ with $r_i=3$. Figure \ref{fig:multiplayer} depicts the arrangements $\minH(T_i)$.  For this two player example it is notationally convenient to use the following covector convention. We identify a covector matrix $\cov_{T_i}(q)$ of dimension $3\times 3$ with a vector  $(S_1, S_2, S_3)$ of subsets of $[3]$, with $k\in S_l$, if and only if $q \in \minH_k(-t_l)$, if and only if $\cov_{T_i}(q)_{l,t_k} = 1$. This identification extends to the $3\times 3 \times 3$ tensor corresponding to an outcome function giving a $3\times 3$ matrix whose entries are singleton subsets of $[3]$. Let $g:T_1\times T_2 \to [3]$ be an outcome-function. Upon identifying, Proposition \ref{prop:multiplayer} then states that each column and row of this $3\times 3$ matrix $g$ is a covector of a full-dimensional cell of the arrangement $\minH(-T_1)$ and $\minH(-T_2)$, respectively. 

Some examples of dominant strategy incentive compatible outcome functions are the following matrices, whose rows are labeled by the types of player 1,  $t_{1,1}, t_{1,2}, t_{1,3}$ and the columns by the types of player 2, $t_{2,1}, t_{2,2}, t_{2,3}$ $$ f= \begin{pmatrix} 3&3&1\\3&1&2\\1&2&2\end{pmatrix}, \quad g= \begin{pmatrix} 3&3&1\\3&1&1\\1&2&2\end{pmatrix},\quad  h= \begin{pmatrix} 3&3&3\\3&1&2\\3&1&2\end{pmatrix}.$$
\end{example}
Each column is a covector of the arrangement on $T_1$, each row is a covector of the arrangement on $T_2$, thus, $f, g, h$ are DIC.
\begin{figure}[h!]
    \centering
    \begin{subfigure}[t]{0.45\textwidth}
        \centering
        \begin{tikzpicture}[scale=0.75]
		
		
		\node at (3,5) {\tiny $(1,2,2)$};
		\node at (5.5,3) {\tiny $(3,1,2)$};
		\node at (8.5,1.5) {\tiny $(3,3,1)$};
		\node at (2.25,2.25) {\tiny $(3,2,2)$};
		
		\filldraw (2,4) circle (2.5pt);
		\draw [line width=0.1mm](0,2) -- (2,4) -- (2,6) -- (2,4) -- (10,4);
		\node[below right] at (2,4) {$t_{1,1}$};
		
		\filldraw (4,2) circle (2.5pt);
		\draw [line width=0.1mm](2,0) -- (4,2) -- (4,6) -- (4,2) -- (10,2);
		\node[above right] at (4,2) {$t_{1,2}$};
		
		\filldraw (7,1) circle (2.5pt);
		\draw [line width=0.1mm](6,0) -- (7,1) -- (7,6) -- (7,1) -- (10,1);
		\node[above left] at (7,1) {$t_{1,3}$};

		\draw[line width=0.1mm] (0,0) rectangle (10,6);
\end{tikzpicture}
        \caption{Arrangement of Player 1 on $T_1$.}
    \end{subfigure}%
    \hspace{0.5cm} 
    \begin{subfigure}[t]{0.45\textwidth}
        \centering
        \begin{tikzpicture}[scale=0.75]
		
		\node at (3,5) {\tiny $(1,2,2)$};
		\node at (5.5,5) {\tiny $(1,1,2)$};
		\node at (5.5,3) {\tiny $(3,1,2)$};
		\node at (8.5,3) {\tiny $(3,1,1)$};
		\node at (8.5,1.5) {\tiny $(3,3,1)$};
		\node at (8.5,0.5) {\tiny $(3,3,3)$};
		
		\filldraw (2,4) circle (2.5pt);
		\draw [line width=0.1mm](0,2) -- (2,4) -- (2,6) -- (2,4) -- (10,4);
		\node[below right] at (2,4) {$t_{2,1}$};
		
		\filldraw (4,2) circle (2.5pt);
		\draw [line width=0.1mm](2,0) -- (4,2) -- (4,6) -- (4,2) -- (10,2);
		\node[above right] at (4,2) {$t_{2,2}$};
		
		\filldraw (7,1) circle (2.5pt);
		\draw [line width=0.1mm](6,0) -- (7,1) -- (7,6) -- (7,1) -- (10,1);
		\node[above left] at (7,1) {$t_{2,3}$};

		\draw[line width=0.1mm] (0,0) rectangle (10,6);
\end{tikzpicture}
        \caption{Arrangement of Player 2 on $T_2$. }
    \end{subfigure}
\caption{Here $T_1 = \{t_{1,1}, t_{1,2}, t_{1,3}\}$ and $T_2 = \{t_{2,1}, t_{2,2}, t_{2,3}\}$. The min-plus arrangements $\minH(-T_1)$ and $\minH(-T_2)$ are depicted in the left and right panels respectively.}\label{fig:multiplayer}
\end{figure}
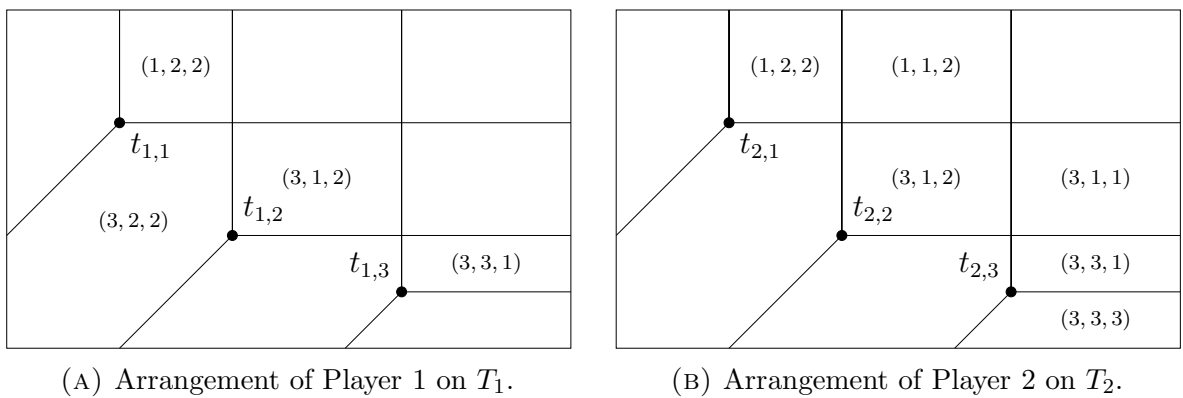

\subsection{General type spaces}\label{sec:general.types}
Thus far the main results of our paper, Theorems \ref{thm:main.generic}, \ref{thm:dimension} and \ref{thm:realizable}
have been stated for compact type spaces $T$. Some results also require $T$ to be tropically generic (cf. Corollary \ref{cor:cell.mechanism.bijection}).
In this section we demonstrate how these results can be generalized through examples. 
In particular, Examples \ref{ex:compact.closure} and \ref{ex:noncompact.closure} show the pathologies that can occur with non-compact or non-generic type spaces, while Examples \ref{ex:limit} and \ref{ex:generic.perturbation} show how to extend the theory to these cases via generic perturbations and finite approximations. It is straight forward to formalize the ideas in these examples using standard techniques.

One can directly adapt the definitions to apply to non-compact type spaces as follows.

\begin{example}[Type spaces with compact closure]\label{ex:compact.closure}
 Consider the type space $$T=\{(0, 1/n, 1/n):n \in\N\}\cup \{(0,-1,0)\}\}. $$ As a subset of $\R^3$ this type set is not closed, since $(0,0,0)$ is a limit point which is not in $T$. In fact, $T\cup \{ (0,0,0)\}$ is the closure of $T$ and compact. Now suppose we were to define $\tconv(T)$ as we did in Definition \ref{def:tropical.hull}. Observe that this is no longer a polyhedral complex. Indeed, the polyhedron $\textsf{conv}((0,-1,0),(0,0,0))+\R\cdot(1, \ldots, 1)$, where $\textsf{conv}$ denotes the ordinary convex hull, is a face of $\sigma = \left(\bigcap_{n\in\N}\minH_1(0,1/n,1/n)\right)\cap\minH_2(0,-1,0)\in \minH(-T)$ but not contained in $\minH(-t)$. This face is only a cell upon considering the intersection of all sectors in the arrangement on the closure $T\cup \{(0,0,0)\}$. 

 To deal with type spaces whose closure is compact, two ways present themselves, (i) applying Definition \ref{def:tropical.hull}to the topological closure of $T$, (ii) defining $\tconv(T)$ as the smallest polyhedral complex containing all bounded cells of $\minH(-T)$. In fact, one can show that the tropical polytope in the sense of Definition \ref{def:tropical.hull} of the closure of a bounded type space agrees with the smallest polyhedral complex on $T$ containing the intersection of all sectors.  

 While either approach is a valid way forward, we emphasize one subtlety. The definition of the  dimension-graph needs to be adapted if proceeding as in (ii). To see this, note that $(0,0,0)$ would have to be a cell in either case, but this point is present only in the closure of $T$, so the dimension graph of $(0,0,0)$ would be connected if the tropical polytope were defined as in (i), but not if it is defined as in (ii). This problem is resolved by defining the dimension graph as the \emph{undirected} graph on $m$ nodes with edge $(i,j)$ if and only if $$\textsf{dist}(\minH_{ij}(-p), \minH_i(-p)\cap T)=0.$$ It is easy to see that this agrees with (\ref{eqn:dimension.graph}) when $T$ is compact, but is well defined for any type space. Adapting our theorems to the strongly connected components of the dimension graph is then straight forward. 
\end{example}

\begin{example}[Type spaces without compact closure]\label{ex:noncompact.closure} Suppose the type space is given by $$T=\{(0,n,m):n,m\in \N \}.$$ Clearly, any mechanism that is not onto requires some prices to be infinite. Thus we must consider the closure of $T$ in $(\R\cup \{+\infty\})^m\backslash \{(+\infty, \ldots,+\infty)\}.$ The notion of tropical polytopes and covectors of \cite{develin2004tropical} generalizes to this case, see \cite{fink2013stiefel, joswig2016weighted}. In particular \cite[Section 3.5]{joswig2016weighted} defines a covector in this case, which can be used to extend the present theory. The obstacle to overcome is that the notion of ``bounded cells" to apply Definition \ref{def:tropical.hull} is ill defined for any cell cells containing points with infinite coordinates. This problem can be resolved by viewing the ``boundary", i.e. the set of points with infinite coordinates as  lower-dimensional affine copies of $\R^d$ for $d<m$. Since this extension requires notation that is somewhat involved we prefer to restrict attention to the compact case in this paper. We wish to point out one subtlety of tropical arithmetic however. While we could use the $\min$ and $\max$ convention interchangeably in $\TA^{m-1}$, this is no longer the case when admitting infinite values, since the generalization of tropical affine space, called the \emph{tropical projective torus} depends on choice of tropical addition (by adding $+\infty$ instead of $-\infty$ which are the additive identities for $\underline \oplus$ and $\overline\oplus$, respectively). 
\end{example}

We remark, that defining the tropical polytope on $T$ as in (ii) of Example \ref{ex:compact.closure}, one can apply our theory by considering only outcome functions with finite prices, and considering projections to obtain any  outcome function that is not onto, thus not having to explicitly handle the subtleties posed by allowing for infinite prices.  

This case can also be handled using generic perturbations. 

\begin{example}[Limiting approximation] \label{ex:limit}
	The set $T \subset \TA^2$ in Figure \ref{fig:re} Panel (\textsc b) consists of the three parallel lines. This set can be approximated as the limit of a finite sequence $(T^k)$, we  schematically depicted $T^{21}$ by the black points in Panel \textsc{(a)}, together with tropical polytope $\tconv(T^{21})$ shaded gray. Panel \textsc{(b)} depicts the limit. It can be shown that the tropical polytopes converge as polyhedral complexes in the Hausdorff distance based on the Euclidean distance. 
	\end{example}
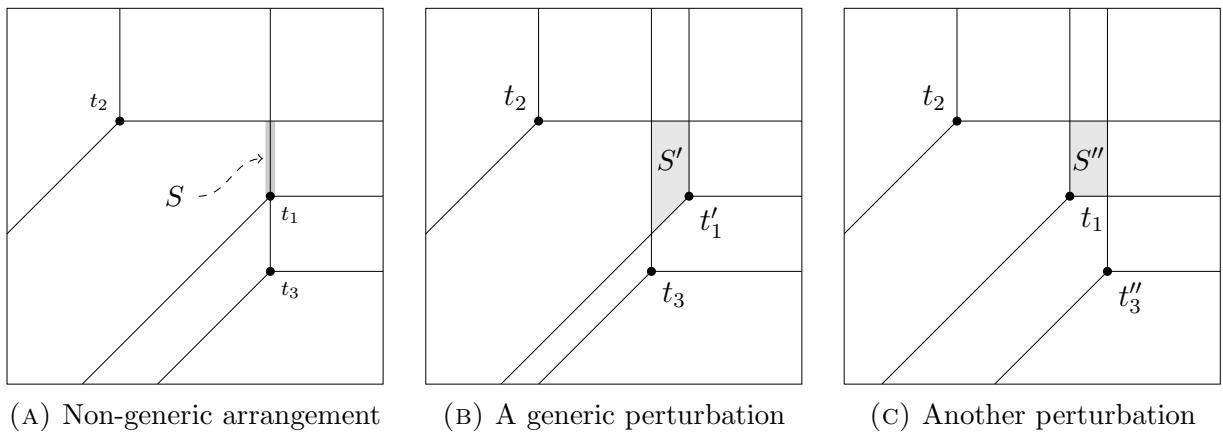
\begin{figure}
\centering
\begin{subfigure}{0.33\textwidth}
\centering
\begin{tikzpicture}[scale=0.5]
	\draw [line width=1.3mm, black!20] (7,5) -- (7,7);
	\draw [line width=0.1mm](3,10) -- (3,7) -- (10,7); 
	\draw [line width=0.1mm] (3,7) -- (0,4);
	\draw [line width=0.1mm](7,10) -- (7,3)-- (4,0);
	\draw [line width=0.1mm](2,0) -- (7,5) -- (10,5);
	\draw [line width=0.1mm](7,3) -- (10,3);
	
	\filldraw (3,7) circle (3pt);
	\filldraw (7,5) circle (3pt);
	\filldraw (7,3) circle (3pt);
	\node[below right] at (7,5) {\tiny $t_1$};
	\node[below right] at (7,3) {\tiny $t_3$};
	\node[above left] at (3,7) {\tiny $t_2$};
	\draw[line width=0.1mm] (0,0) rectangle (10,10);
	
	\draw
    [->,thin, dashed](5.1,5) to [out=0, in=180] (6.8,6);
    \node[left] at (5,5) {\small $S$};
	
\end{tikzpicture}
\caption{Non-generic arrangement}
\end{subfigure}
    ~ 
\begin{subfigure}{0.33\textwidth}
\centering
\begin{tikzpicture}[scale=0.5]
	\filldraw [black!10](6,4) -- (7,5) -- (7,7) --(6,7); 
	\draw [line width=0.1mm](3,10) -- (3,7) -- (10,7); 
	\draw [line width=0.1mm] (3,7) -- (0,4);
	\draw [line width=0.1mm](6,10) -- (6,3)-- (3,0);
	\draw [line width=0.1mm](2,0) -- (7,5) -- (10,5);
	\draw [line width=0.1mm](6,3) -- (10,3);
	\draw [line width=0.1mm](7,5) -- (7,10);

	\filldraw (3,7) circle (3pt);
	\filldraw (7,5) circle (3pt);
	\filldraw (6,3) circle (3pt);
	\node[below right] at (7,5) {$t'_1$};
	\node[below right] at (6,3) {$t_3$};
	\node[above left] at (3,7) {$t_2$};
	\draw[line width=0.1mm] (0,0) rectangle (10,10);
	\node at (6.5,6) {\small $S'$};
\end{tikzpicture}
\caption{A generic perturbation}
\end{subfigure}
~
\begin{subfigure}{0.33\textwidth}
\centering
\begin{tikzpicture}[scale=0.5]
	\filldraw [black!10](6,5) -- (7,5) -- (7,7) --(6,7); 
	\draw [line width=0.1mm](3,10) -- (3,7) -- (10,7); 
	\draw [line width=0.1mm] (3,7) -- (0,4);
	\draw [line width=0.1mm](7,10) -- (7,3)-- (4,0);
	\draw [line width=0.1mm](1,0) -- (6,5) -- (10,5);
	\draw [line width=0.1mm](6,5) -- (6,10);
	\draw [line width=0.1mm](7,3) -- (10,3);
	
	\filldraw (3,7) circle (3pt);
	\filldraw (6,5) circle (3pt);
	\filldraw (7,3) circle (3pt);
	\node[below right] at (6,5) {$t_1$};
	\node[below right] at (7,3) {$t''_3$};
	\node[above left] at (3,7) {$t_2$};
	\draw[line width=0.1mm] (0,0) rectangle (10,10);
	\node at (6.5,6) {\small $S''$};
\end{tikzpicture}
\caption{Another perturbation}
\end{subfigure}
	
\caption{A non-generic arrangement on three types and possible generic perturbations thereof. The cells shaded gray support onto mechanisms. Figure accompanies Example \ref{ex:generic.perturbation}.}\label{fig:generic.perturbation}
\end{figure}

\begin{example}[Generic perturbation of finite point-configurations] \label{ex:generic.perturbation} Figure~\ref{fig:generic.perturbation} depicts arrangements on three points. The points in Panel \textsc{(a)} are not generic: the projections of $t_1$ and $t_3$ onto the first coordinate is a point. Panels \textsc{(b)} and \textsc{(c)} show possible generic perturbations of these points. The covectors of the cells shaded gray are \begin{equation*}
\begin{matrix}
	\underline{\mathsf{coVec}}_T(S)=\begin{pmatrix}
1&0&1\\ 1&0&1\\0 &1&0	
\end{pmatrix}, & 
\mathsf{\underline{coVec}}_L(S')=\begin{pmatrix}
0&0&1\\ 1&0&0\\0&1&0	
\end{pmatrix}, &
\mathsf{\underline{coVec}}_L(S'')=\begin{pmatrix}
1&0&0\\ 0&0&1\\0&1&0	
\end{pmatrix}
\end{matrix}	
\end{equation*}
The cells $S'$ and $S''$ are full-dimensional and thus support surjective outcome functions by Corollary \ref{cor:cell.mechanism.bijection}. Indeed, their covectors are the outcome functions $g$ and $h$ respectively, with $g(t'_1)=2, g(t_2)=3, g(t_3)=1$ and $h(t_1)=1, h(t_2)=3, h(t_3'')=2$. The cell $S$ in Panel \textsc{(a)} can be seen as a limit of $S'$ as $t_1'$ approaches $t_1$, or as the limit of $S''$ as $t_3''$ approaches $t_3$. It supports the surjective outcome functions $g$ and $h$, that is $P(\nu(g)) = P(\nu(h)) = P(\cov_T(S))$. \end{example}

We remark that by combining the two proceed examples one can choose finite and generic sequences of points that approximate any given type space and apply the present theory to these approximations. Thus, while there no longer is a bijection between onto outcome functions and  full-dimensional cells of the tropical polytope in the non-generic case, what remains true is that a cell supports an onto outcome function if and only it it is obtainable as a limit of full-dimensional bounded cells of finite generic approximations. Generic perturbations are a standard technique in tropical geometry, see \cite{maclagan2015introduction}.

\section{Proofs}\label{sec:proofs}

\subsection{Proofs of results in Section \ref{sec:overview}}
\begin{proof}[Proof of Proposition \ref{prop:covector}]
Fix $p \in \R^m$ and pick $g \in \mathcal{G}(p)$. Since $(g,p)$ is IC, we have for $g(t) = i$, the inequality
\begin{equation}\label{eqn:one.edge}
t_i-p_i \leq t_j-p_j \quad \forall j \in [m], j \neq i.
\end{equation}
Therefore, $p \in \minH_i(-t)$ by definition of $\minH_i(-t)$ so that $\covec(p)$ contains the edge $(t,i)$. This shows that $g$ is a subgraph of $\covec(p)$. Conversely, suppose the graph of $g$ is a subgraph of $\covec(p)$. By hypothesis $g$ is a well-defined outcome function, so each $t \in T$ is assigned to some $g(t) \in [m]$. Choose $t \in T$ and suppose $g(t) = i$. Since $p \in \minH_i(-t)$ the relation (\ref{eqn:one.edge}) holds and $(g,p)$ is IC. Thus $g \in \mathcal{G}(p)$, as claimed. 
\end{proof}

We will make repeated use of Corollary 12 in \cite{develin2004tropical}, which we state here for ease of reference.

\begin{lemma}[{\cite[Corollary 12]{develin2004tropical}}]
A cell $\sigma\in \minH(-T)$ is bounded modulo addition of constants if and only if for all $i\in [m]$ there is $t\in T$ such that $\sigma\subset \minH_i(-t)$.	
\end{lemma}

\begin{proof}[Proof of Proposition \ref{prop:projection.payments}]
If $\img = [m]$, then $\pi_g$ is a the identity function, and the statement is trivial. Suppose $\img$ is a strict subset of $[m]$. If $g$ is not IC, then $\Plift(g) = \mathcal{P}(g) = \emptyset$, and the statement is again trivial. Thus, we shall assume that $g$ is IC and $\img$ is a strict subset of  $[m]$.
Consider the polyhedron $\sigma =\bigcap_{t\in T}\minH_{g(t)}(-t)$. By IC $\sigma$ is non-empty and thus a cell in $\minH(-T)$. However, since $g$ is not onto, $\sigma$ is unbounded modulo addition of constants by the foregoing lemma. Thus, by definition, $\Plift(g)\in\tconv(T)$ is obtained by restricting $\sigma$ to $\tconv(T)$. Again by the foregoing lemma, for each $j\in [m]\backslash \img$ there must be some $s\in T$ such that $\Plift(g)\subset \minH_j(-s)$. If we define $g(s)=i_j$, if follows that $\Plift(g)\subset \minH_j(-s)\cap \minH_{i_j}(-s)$, i.e. $q_{i_j}-q_j=c^j$ with $c^j= s_{i_j}-s_j$ is valid for all $q\in \Plift(g)$. This also shows that the constants $c^j$ cannot depend on the choice of $s$. Now suppose $p\in \mathcal{P}(g)$, lift it to $\bar p\in \R^m$ by setting $\bar p_i= p_i$ if $i\in \img$, and $$\bar p_j = p_{i_j} -c^j $$ if $j\in [m]\backslash \img$. The definition of the lift shows that $\bar p\in \Plift(g)$. We now claim that the lift is an affine isomorphism. It is clear that the map is affine. Obviously, if $q\in \mathcal{P}(g)$, then $\pi_g(\bar p)=p$. To conclude, we claim that if $q\in \Plift(g)$, then the lift of $\pi_g(q)$ is again $q$. But this is immediate, since all $q\in \Plift(g)$ satisfy the linear relation that was used to construct the lift.
\end{proof}

\begin{proof}[Proof of Proposition \ref{prop:covec.full.dim}]
First suppose $\sigma\in \minH(-T)$ is a full-dimensional cell and let $\nu$ denote its covector. Pick $p\in \operatorfont{int}(\sigma)$, by definition $p\in \bigcap_{(i,t)\in \nu}\operatorfont{int}(\minH_i(-t))$, i.e. $p_i-t_i< p_j-t_j$ for all $(i,t)\in \nu$ and $j\neq i$. This shows that for each $t\in T$ there is a unique index $i$ that attains the minimum $j\mapsto p_j-t_j$. In other words, for each $t\in T$ there is a unique index $i\in [m]$ such that $\nu_{it}=1$ and $\nu_{jt}=0$ for $j\neq i$. Defining $g(t)=i$ if and only if $\nu_{it}=1$ gives a well-defined and IC outcome-function, so that (\ref{eqn:g.sigma}) is the only outcome function that is a subgraph of $\covec(\sigma)$. For the converse, suppose that $\sigma$ is not full-dimensional. As $\minH(-T)$ is a polyhedral complex, $\sigma$ equals the intersection of at least two full-dimensional cells $\nu_1, \nu_2$ of $\minH(-T)$. If $f_1$ is the unique IC outcome associated to $\nu_1$, and $f_2$ is the unique IC outcome associated to $\nu_2$, then $f_1, f_2$ are both subgraphs of $\covec(\sigma)$. Therefore, $\sigma$ has more than one IC outcome function associated with it.
\end{proof}

\subsection{Proofs of results in Section \ref{sec:main}}

\begin{proof}[Proof of Theorem \ref{thm:main.generic}]
	If $\sigma$ is a full-dimensional cell, then the claim  follows from Proposition \ref{prop:covec.full.dim}. For the converse direction, suppose $g$ is an IC outcome function. Then by IC, the polyhedron $$\sigma= \bigcap_{t\in T} \minH_{g(t)}(-t)$$is non-empty and a cell of $\minH(-T)$. We now claim that $\sigma$ is full-dimensional. If not, then there are $i\neq j$ and a $c\in \R$ such that $p_i-p_j=c$ is valid for all $p\in \sigma$, equivalently $\sigma\subset \minH_{ij}(-t)$ for some $t\in T$. Since $g$ is a function, there must be $t\neq t'$ with $t_i-t_j=t'_i-t'_j=c$. Projecting $T$ onto coordinates $i$ and $j$, we found $t$ and $t'$ on a tropical hyperplane contradicting genericity of $T$. Hence $\sigma$ must be full-dimensional. 
	The final claim is clear from the definition of $\Plift(g)$. 
\end{proof}

\begin{proof}[Proof of Corollary \ref{cor:cell.mechanism.bijection}]This follows from Theorem \ref{thm:main.generic} and  \cite[Corollary 12]{develin2004tropical}.	
\end{proof}

The following is an adaptation of \cite[Prop. 17]{develin2004tropical}.
\begin{lemma}\label{lem:projection}
Let $\sigma \in \minH(-T)$ be a cell, and $p$ a point in its relative interior. The number of connected components of the dimension graph of $p$ equals the dimension of $\sigma$ as a polyhedron in $\R^m$.
\end{lemma}

\begin{proof}  
Let $\sigma$ be a cell in $\minH(-T)$ and let $\nu$ denote its covector. Then
$$ \sigma = \bigcap_{t \in T} \minH_{i \in \nu(t)}(-t). $$
Note that the dimension graph is completely determined by the covector namely, $$ \{i,j \}~\text{is an edge}~\Leftrightarrow ~\maxH_{ij}(-p)\cap T\neq \emptyset ~\Leftrightarrow~ \text{there is}~t\in T~\text{with}~\nu_{it}=\nu_{jt}=1.$$

We will use induction on the dimension. First suppose that the dimension of $\sigma$ is $m$, the maximal possible. This means that the interior of $\sigma$ is non-empty. Pick $p\in \operatorfont{int}(\sigma)$, by definition there is an $\epsilon>0$ such that such that $t_i-p_i>t_j-p_j+\epsilon$ for all $(i,t)\in \nu$ and all $i\neq j$. Clearly this implies that $\maxH_{ij}(-p)\cap T =\emptyset$ and the dimension graph is totally disconnected. For the converse direction, assume the dimension graph of $p$ is totally disconnected. Fix $i\in [m]$, then by the compactness of $T\cap \maxH_i(p)$ we find an $\epsilon_i>0$ such that $$t_j-p_j \geq t_i-p_i +\epsilon_i $$ for all $t\in T$ such that $(i,t)\in \nu$. Letting $\epsilon$ be the minimum of $\epsilon_1, \ldots, \epsilon_m$, we see that $p$ contains an open $\epsilon$-neighborhood in $\sigma$, so that $\sigma$ is full-dimensional.

Now suppose there is a $t\in T$ with $\sigma \subset \minH_{ij}(-t)$. This is the case if and only if for some $t\in T$ and $c:=t_i-t_j$, $p_i-p_j=c$ is valid for all $p\in \sigma$. Projecting out coordinate $j$ we obtain $\sigma'\subset \R^{m-1}$ which is affinely isomorphic to $\sigma$, so that both have the same dimension. 
Consider now the tropically linear map $\TA^{m-1}\to \TA^{m-2}$ defined by $x'_k =x_k$ for $k\neq i,j$ and $x'_i=\max\{x_i, x_j+c \}$. Denote the image of $T$ by $T'$ and the image of $p\in \sigma$ by $p'$. We now claim that $\sigma'$ is a cell in $\minH(-T')$ which is tropically isomorphic to $\sigma$. Indeed, for $(i,t)\in \nu$, we have $t_i= t_j- (p_j-p_i)= t_j-c$. So $t'_i =t_i$ if $(i,t)\in \nu$. Thus $\nu'_{kt'}=\nu_{kt}$ for $k\neq i,j$ and $\nu'_{it'}=1$ if and only if $\nu_{it}=1$ or $\nu_{jt}=1$ is the covector of $p'$ with respect to $T'$. Thus $\nu'$ is the covector of $\sigma'$ with respect to $T'$. The dimension graph of $p'\in \sigma'$ with respect to $T'$ is obtained by contracting the edge $\{i,j\}$. Now induct on the dimension.  
\end{proof}

\begin{proof}[Proof of Theorem \ref{thm:dimension}]
Let $g:T\to[m]$ be an IC outcome function.
By Lemma \ref{lem:projection}, the dimension of $\Plift(g)$ equals the number of connected components of its dimension graph. Since $\Plift(g)$ is a cell, it is a tropical simplex whose generators equals the dimension in $\R^m$, \cite[Theorem 7]{joswig2010tropical}. Hence also its tropical dimension equals the number of connected components of its dimension graph. 
\end{proof}

\begin{proof}[Proof of Corollary \ref{cor:re.type}]
If $T$ is RE, then for every $p \in \tconv(T)$, its dimension graph equals that of some RE mechanism $(g,p)$ by Proposition \ref{prop:covector} which is connected by hypothesis. Conversely, suppose the dimension graph of every $p$ in $\tconv(T)$ is connected. Take a IC mechanism $(g,p)$, with $p$ in the relative interior of $\mathcal{P}(g)$. As its dimension graph is connected, $\Eig(L^g) = \{p\}$, so $(g,p)$ is RE. 
\end{proof}

\subsection{Proofs of results in Section \ref{sec:allocation.matrices}}

\begin{proof}[Proof of Theorem \ref{thm:realizable}]
Suppose $L$ is realized by $g$. For each $j = 1, 2, \ldots, m$, $g^{-1}(j) \subseteq \overline{\mathcal{L}}_j$, since $\max_{k \in [m]} (L_{jk} + t_k) = L_j \overline{\odot} t =  t_j$. The sets $g^{-1}(1), \ldots, g^{-1}(m)$ partition $T$, so $T \subseteq \bigcup_{j=1}^m \overline{\mathcal{L}}_j$. It remains to show that $L$ separates $T$ at an arbitrary pair $\{j,k\}$, where $j, k \in [m], j\neq k$. We have that $\mathsf{d}(T \cap \mathcal{\overline{L}}_j, \mathcal{\overline{L}}_{jk})=0$ since $L_{jk} = \inf_{t \in g^{-1}(j)}\{t_j - t_k\}$. If $\mathcal{I}_{jk} = \mathcal{I}_{kj} = \{s\}$, then if $g(s) = j$, the infimum in the definition of $L_{kj}$ must be achieved by a $(k,j)$-witness, so there must exist a $(k,j)$-witness. Conversely, if $g(s) = k$, then there must exist a $(j,k)$-witness. This shows that $L$ separates $T$ at $\{j,k\}$, as desired. For the converse direction, suppose $L$ is a matrix with zero diagonal such that $T \subseteq \bigcup_{j=1}^m \overline{\mathcal{L}}_j$ and 
$L$ separates $T$. Define $g: T \to [m]$ as follows. For a point $t \in T \cap \overline{\mathcal{L}}_j^\circ$, let $g(t) = j.$ The remaining points must lie on $\bigcup_{j,k \in [m], j \neq k}\mathcal{I}_{jk}$ by definition of $\mathcal{I}_{jk}$'s. Assign these points such that points on $\mathcal{I}_{jk}$ have either outcome $j$ or $k$, and such that on every non-empty boundary $\mathcal{I}_{jk}$ there exists a point with outcome $j$. The only case where this cannot be done is 
if $\mathcal{I}_{jk} = \mathcal{I}_{kj} = \{s\}$. In this case, if there is a $(j,k)$-witness, set $g(s) = k$, else, since $L$ separates $T$, there must exists a $(k,j)$-witness, so set $g(s) = j$. We claim that $L$ is realized by $g$. Fix $j, k \in [m], j\neq k$. By definition of $g^{-1}(j)$,
\begin{equation}\label{eqn:infimum.ell}
\inf_{s \in g^{-1}(j)} (s_j - s_k) \geq L_{jk}.
\end{equation}
Furthermore, there must be a point in $\mathcal{I}_{jk}$ or a $(j,k)$-witness in $g^{-1}(j)$. So (\ref{eqn:infimum.ell}) holds with an equality, thus $L = L^g$, as claimed.
\end{proof}

\begin{proof}[Proof of Proposition \ref{prop:weak.mon}]
Fix a pair of indices $j,k \in [m]$, $j \neq k$. We claim that
\begin{align*}
L_{jk} + L_{kj} > 0 &\Leftrightarrow \maxL_j \cap \maxL_k = \emptyset, \\
L_{jk} + L_{kj} = 0 &\Leftrightarrow \maxL_j \cap \maxL_k \mbox{ on their boundaries, i.e. } \maxL_j \cap \maxL_k = \partial\maxL_j \cap \partial\maxL_k \\
L_{jk} + L_{kj} < 0 & \Leftrightarrow \maxL_j \cap \maxL_k \mbox{ in their interiors, i.e. }\maxL^\circ_j \cap \maxL^\circ_k \neq \emptyset.
\end{align*}
This claim implies statement (4) by definition of weakly monotone. Now let us prove the claim. Note that $\maxL_j$ and $\maxL_k$ are closed polyhedra, so they either do not intersect, intersect on their boundaries, or intersect in their interiors. So it is sufficient to prove the last two equivalences in the statements. Suppose there exists $t \in \maxL_j \cap \maxL_k$. Then by definition of the sectors,
$$ L_{jk} + L_{kj} \leq (t_j - t_k) + (t_k - t_j) = 0. $$
In particular, strict equality holds if and only if $t$ lies on the boundary of $\maxL_j \cap \maxL_k$, while strict inequality holds if and only if $t$ lies in either of their interiors.  In that case, one can pick a $t' \in \maxL^\circ_j \cap \maxL^\circ_k$, then $L_{jk} + L_{kj} < (t'_j - t'_k) + (t'_k - t'_j) = 0$. This proves the desired statement. 
\end{proof}

\section{Acknowledgements}
We wish to thank John Weymark for suggesting to explore possible connections of mechanism design to tropical geometry. Our geometric approach is inspired by his work with coauthors in \cite{cuff2012dominant}.
 Part of this work was written during the Hausdorff Center Conference ``Tropical Geometry and Economics''. We thank the participants for helpful discussions. The Hausdorff Center's role in catalyzing research in mathematical economics is gratefully acknowledged. We also wish to thank Benny Moldovanu and Martin Pollrich for helpful discussions. Finally we thank Paul Edelman and John Weymark for pointing out an error in an earlier version, and Alexey Kushnir for generously sharing his manuscript.

\section{Conclusion}\label{sec:conclusion}
In this paper we developed a geometric theory of incentive compatibility that naturally reflects economic incentives, Section \ref{sec:connections} and Theorem \ref{thm:main.generic}. Our approach uses tropical geometry, which combines algebraic and geometric techniques, providing a framework for mechanism design not utilized prior. Importantly, it is this interplay of techniques which allowed us to give a systematic characterization of all IC prices of an outcome function, Theorems~\ref{thm:tropicaleigenspacegenerator} and \ref{thm:dimension}, and let us show how to use cyclical monotonicity constructively, Theorem \ref{thm:realizable}. These results manifest the conceptual novelty of the tropical view on mechanism design and its utility, beyond supplying a complementary view on the network flow approach to mechanism design \cite{rochet1987necessary,vohra2011mechanism}. We remark that our theory applies equally to mechanism design with and without money, and is multidimensional from the outset. Importantly, our characterization of IC payments can be used to analyze mechanisms in the absence of revenue equivalence, \cf Example \ref{ex:information.rents}.

\bibliographystyle{plain}
\bibliography{mechanismDesign.bib}
\end{document}